\newcommand{\bra}[1]{\left(#1\right)}
\newcommand{\abs}[1]{\left|#1\right|}
\newcommand{\sbra}[1]{\left[#1\right]}
\newcommand{\set}[1]{\left\{#1\right\}}
\let\epsilon=\varepsilon
\let\rho=\varrho
\newtheorem{theorem}{Theorem}
\newtheorem{lemma}[theorem]{Lemma}
\newtheorem{prop}[theorem]{Proposition}
\newtheorem{corollary}[theorem]{Corollary}
\newtheorem{definition}[theorem]{Definition}
\numberwithin{theorem}{section}
\numberwithin{equation}{section}
\newcommand\cS{{\mathcal S}}
\newcommand{\bP}{\mathbb{P}}
\newcommand{\bZ}{\mathbb{Z}}
\begin{document}
\title{Perfect Sampling of $q$-Spin Systems on $\mathbb Z^2$\\ via Weak Spatial Mixing}
\author{Konrad Anand\thanks{Supported by a studentship from the School of Mathematical Sciences.} }
\author{Mark Jerrum\thanks{Supported by grant EP/S016694/1 `Sampling in hereditary classes' from the Engineering and Physical Sciences Research Council (EPSRC) of the UK.}}
\affil{School of Mathematical Sciences\\Queen Mary, University of London\\Mile End Road, London E1 4NS\\United Kingdom}
\date{}
\maketitle

\begin{abstract}
We present a perfect marginal sampler of the unique Gibbs measure of a spin system on $\bZ^2$. The algorithm is an adaptation of a previous `lazy depth-first' approach by the authors, but relaxes the requirement of strong spatial mixing to weak.  It exploits a classical result in statistical physics relating weak spatial mixing on $\bZ^2$ to strong spatial mixing on squares. When the spin system exhibits weak spatial mixing, the run-time of our sampler is linear in the size of sample. Applications of note are the ferromagnetic Potts model at supercritical temperatures, and the ferromagnetic Ising model with consistent non-zero external field at any non-zero temperature.
\end{abstract}

\section{Introduction}

This paper presents a new version of `lazy depth-first sampling', as presented in \cite{AnandJerrum2021}, which extends the efficacy of the algorithm beyond the domain of strong spatial mixing.

Originally developed in the study of statistical mechanics, spin systems are now also an active area of research in probability, machine learning, and theoretical computer science. Different communities study the same basic concept using different terminology, an example being Constraint Satisfaction Problems (CSPs), which originated in artificial intelligence. In brief, a spin system is defined by a graph~$G$, a finite (in our case) set of spins, and a local `specification'.  A configuration of the system is an assignment of spins to the vertices of the graph.  The local specification defines a probability distribution on the configurations of the system.  Although the specification is local, this Gibbs distribution can exhibit long-range effects.  A primary algorithmic goal in this area is to sample configurations from the Gibbs distribution.

The question of sampling from spin systems has produced an array of results recently and over the past decades. Many approaches have been based in Markov chains, in particular using Glauber dynamics \cite{glauber1963}. Another direction is that of algorithms which try to reduce randomness by computing or partially computing the proportions of the samples. These results have been come from decay of correlations \cite{weitz2006counting}, Taylor expansion of the partition function \cite{PatelRegts, BarvinokSoberon}, and efficiently computing a Markov chain's history \cite{Feng2022TowardsDM}.

The strand of research which this paper falls into is perfect sampling. One motivation is obvious: a perfect sampler has no error, but another less obvious motivation is having a definite termination condition, in contrast to Markov chain simulation which has to be carried on for `sufficiently many' steps. This requires an a priori bound on the mixing time of a Markov chain, which must be analytically derived and may be much larger than actually necessary. In contrast, perfect samplers can respond to characteristics of a problem  instance, and produce results more quickly on favourable instances.

The first demonstration of perfect sampling came with the Coupling From The Past (CFTP) approach of Propp and Wilson \cite{ProppWilson}.  There have been other recent approaches to the problem of efficient perfect sampling which move away from Markov chains, such as the partial rejection sampling of Guo, Jerrum and Liu \cite{guo2019uniform} and the Bayes filter of Feng, Guo, and Yin \cite{Feng2019PerfectSF}. The original lazy depth-first sampling \cite{AnandJerrum2021} and the adapted version introduced in this paper fall in a similar direction.

Another benefit of lazy depth-first sampling is its applicability to infinite spin systems. Spin systems are defined in a local sense, but under certain conditions, which we assume to hold in this paper, the local definition of the spin system extends to a unique Gibbs measure. Of course we cannot sample a configuration on an infinite graph, but a lazy depth-first sampler is intrinsically a marginal sampler, so we can sample finite windows into it.

Van den Berg and Steif \cite{vdBerg1999codings} pioneered perfect sampling from infinite Gibbs measures. They showed that perfect sampling is possible in the case of the ferromagnetic Ising model on $\mathbb{Z}^2$ in the uniqueness regime, i.e., when there is a unique Gibbs measure, using  CFTP. Spinka~\cite{spinka2020finitary}, in a wide-ranging investigation, has taken forward this approach. In contrast, lazy depth-first sampling departs from CFTP, and takes inspiration from partial rejection sampling and related work.

\subsection{Lazy Depth-First Sampling}

Lazy depth-first sampling strengths are clear: it is a perfect sampler, it is a marginal sampler which can sample windows into infinite spin systems, and it is efficient (linear time) on any spin system which exhibits strong spatial mixing with the added condition of subexponential growth of the underlying graph $G$. One can also view it as a sublinear marginal sampler in terms of the size of the whole graph. This is a particularly nice benefit of the algorithm, as most approaches have run-time dependent on the size of the graph, rather than the size of the window.

Lazy depth-first sampling is a simple, recursive algorithm called on a single vertex, $v \in G$. Briefly, we try to guess the spin at a vertex $v$ without looking at spins that live on a boundary around $v$ by computing as much of $v$'s distribution as possible. When we cannot guess, the algorithm recursively determines the spins on a sphere of suitable radius before returning to assign a spin to~$v$. As it runs, the algorithm builds a branching process of recursive calls whose expected size is finite with probability 1 when strong spatial mixing holds.

Relevant conditions for efficiency of lazy depth-first sampling (and many other algorithms) are weak and strong spatial mixing. Informally, weak spatial mixing requires that the marginal distribution of a set of vertices changes little if we change far away spins and nearby spins are unconstrained. Strong spatial mixing requires that the marginal distribution changes little if we change far away spins even when some local spins are fixed.  (We sometimes call the arrangement of fixed spins a `partial configuration'.)

In our first presentation of lazy depth-first sampling to spin systems, our sampling procedure did not take into account the precise structure of the underlying graph and relied on strong spatial mixing to provide a coarse bound to ensure efficiency. In many approaches \cite{Feng2019PerfectSF, spinka2020finitary, sinclair2013spatial,sinclair2017spatial, MR2060631}, this condition is required for efficiency, but some approaches require less: weak spatial mixing \cite{Chen2020OptimalMO, chen2020rapid}. Our adaptation pushes past strong spatial mixing and requires only weak spatial mixing.

The inspiration for the adaption is the classical physics result by Martinelli, Olivieri and Schonnman showing that, for spin systems on $\bZ^2$, weak spatial mixing implies strong spatial mixing on squares \cite{Martinelli1994weakstrong}. In standard strong spatial mixing, the partial configurations are freely arranged, while in strong spatial mixing on squares the partial configurations are restricted to live on a square around the vertices whose change in marginals we are bounding. We use similar arguments to show the efficiency of this adaptation.

This adaptation runs on $\bZ^2$ and takes care to live on a mesh (sublattice of $\bZ^2$) so that all our partial configurations `mostly' live on a square around the vertex we are currently sampling. By doing so, we bypass pathological partial configurations and ensure efficiency of the algorithm with only weak spatial mixing. The algorithm also holds for $q$-spin systems, rather than only two spin systems, and it is worth remarking that one can show efficiency of the algorithm on a variety of 2-d lattices---we consider only $\bZ^2$ for simplicity.

It is worth remarking that the framework of lazy sampling is not restricted to spin systems. Indeed, two recent papers by He, Wang, and Yin and He, Wu, and Yang have applied these principles to the sampling uniform solutions to constraint satisfaction problems \cite{he2022sampling, he2022improved}. Their results show the potential in applications of lazy sampling beyond spin systems.

Informally (with mild added conditions) our result is this:
\begin{theorem}
    Let $\cS$ be a $q$-spin system on $\bZ^2$ which exhibits weak spatial mixing, and let $S$ be as above. Then \textsc{lazy} (Algorithm \ref{alg:LDFS}) is a perfect sampler for $\cS$ which runs in time linear in the number of vertices sampled.
\end{theorem}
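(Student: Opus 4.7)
The plan is to split the theorem into two pieces: correctness of the perfect sample, and the linear runtime. For correctness, I would follow the template established in \cite{AnandJerrum2021}: whenever the sampler commits a spin to a vertex $v$, it draws from the exact conditional distribution given the partial configuration already exposed on previously-visited vertices, so a sequence of such conditional draws reconstructs a genuine sample of the joint Gibbs marginal on the requested window. The only thing to verify is that the algorithm terminates almost surely. Since the adaptation only changes \emph{which} boundary sites get exposed, confining them to squares of an appropriate mesh, the original perfect-sampling argument carries over unaltered once termination is in hand.

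The core of the work is therefore the runtime analysis. I would represent the recursion rooted at a query vertex $v$ as a branching process: each call to \textsc{lazy} at $v$ either succeeds in ``guessing'' $v$'s spin from a partial computation of its marginal, or else spawns recursive calls on a square boundary around $v$ at some suitably chosen radius $R$. The algorithm has been engineered so that every partial configuration it ever conditions on is supported on a square surrounding the vertex currently being sampled. This is the key structural property; it is what lets me invoke the Martinelli--Olivieri--Schonmann result \cite{Martinelli1994weakstrong}, which upgrades the hypothesised weak spatial mixing of $\cS$ on $\bZ^2$ to strong spatial mixing on squares with exponential decay. Consequently, the probability that a guess at $v$ fails is bounded by $\exp(-\Omega(R))$, uniformly over the history of exposed spins.

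The branching-process analysis then proceeds along the lines of \cite{AnandJerrum2021}. A square of radius $R$ in $\bZ^2$ has $O(R)$ boundary vertices, so the expected offspring count at a single recursive call is at most $O(R)\cdot\exp(-\Omega(R))$, which is $o(1)$ as $R$ grows. Tuning the mesh spacing and the guess radius so that this expected offspring count is a constant $c<1$, I obtain a subcritical branching process whose expected size is $O(1)$; summing over the vertices of the requested window yields the advertised linear runtime. A standard moment argument upgrades the finiteness in expectation to almost-sure termination, closing the loop with the correctness step.

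The chief obstacle, distinguishing this proof from the strong spatial mixing version of \cite{AnandJerrum2021}, will be engineering the recursion so that every conditioning event genuinely lives inside a square around the current vertex. Previously exposed spins generated by earlier recursive calls elsewhere in the tree could in principle lie in arbitrary positions, which is exactly the pathology that strong spatial mixing on squares is blind to. Laying the recursion on a carefully chosen mesh, and tracking which exposed spins can legitimately be ignored by the current invocation, is the delicate bookkeeping that enables the Martinelli--Olivieri--Schonmann bound to apply cleanly; once that invariant is in place, the rest is a variation on the earlier analysis.
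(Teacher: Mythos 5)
Your high-level architecture (correctness as in \cite{AnandJerrum2021} plus a subcritical branching-process bound on the recursion) matches the paper, but the pivotal step in your runtime argument has a genuine gap. You assert that the algorithm is ``engineered so that every partial configuration it ever conditions on is supported on a square surrounding the vertex currently being sampled,'' and you then invoke the Martinelli--Olivieri--Schonmann theorem as a black box. That invariant is exactly what the mesh construction \emph{cannot} deliver, and the paper says so explicitly: the recursion at $v$ targets the boundary $\Gamma_v=\partial S$ of a $2L$-square, but previously exposed spins from earlier calls lie anywhere on the mesh $\Gamma$, and since the mesh spacing is $L$ the interior of $S$ contains a cross of mesh lines. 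So the conditioning configuration lives on a set $T$ with $\partial S\subseteq T\subseteq\boxplus S$ (boundary plus internal bisectors), not on $\partial S$ alone, and neither Theorem~\ref{MOStheorem} nor the underlying Proposition~2.1 of \cite{Martinelli1994weakstrong} applies directly to such boundary conditions. The paper's real technical content is Proposition~\ref{weakprop} and Corollary~\ref{weakcor}, which re-prove an MOS-type bound for precisely these $\boxplus S$-supported conditions via an iterated coupling over frames $D_i$ at spacing $L^{1/4}$: on the part $B_i$ of each frame far from $\boxplus S$, weak spatial mixing gives a coupling succeeding with probability at least $\tfrac12$ (Lemma~\ref{simplefact}); on the ``feet'' $A_i$ near the internal cross, weak spatial mixing is useless and the coupling is forced crudely, succeeding with probability $\alpha^{4h}$ thanks to the \emph{soft interaction matrix} hypothesis. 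Your proposal never confronts the vertices near the internal bisectors and never uses (or states) the softness assumption, so the uniform failure bound $\exp(-\Omega(R))$ you feed into the branching process is unsupported as written.

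Two smaller points. First, the decay the paper actually obtains is stretched-exponential, $8Le^{-\gamma_0 L^{1/4}}$ per call (Corollary~\ref{weakcor}), not $e^{-\Omega(L)}$; this is harmless for subcriticality, since $64L^2e^{-\gamma_0 L^{1/4}}<1$ for large $L$, but your quantitative claim should be adjusted accordingly. Second, your final paragraph correctly identifies the bookkeeping issue (exposed spins in ``arbitrary positions''), but resolving it is not mere bookkeeping that lets the existing MOS bound ``apply cleanly''; it requires proving the new mixing estimate for cross-augmented boundaries, which is where the additional hypothesis on the interaction matrix enters.
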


\subsection{Applications}

We focus on two new applications for perfect sampling of spin systems on $\bZ^2$: the ferromagnetic Potts model at supercritical temperature and the ferromagnetic Ising model with constant external field at all temperatures. Each of these results leverages previous work on these models.

Sampling from the ferromagnetic Potts model is an active area of research. There are past approximate samplers at supercritical temperatures for general graphs, e.g., Bordewich, Greenhill, and Patel's work on the mixing of Glauber dynamics for maximum degree $\Delta$ graphs \cite{vireshgreenhillbordewich}, approximate samplers specifically on $\mathbb Z^2$ \cite{MR4060355, MR3794520, MR3663632, MR3068036}, approximate samplers for supercritical temperatures \cite{barvinokregts19, helmuthperkinsregts20, vireshguuspotts}, and a recent sampler of the ferromagnetic Potts model at all temperatures for large enough $q$ \cite{MR4141796}.

We present the first perfect sampler for the ferromagnetic Potts model on $\bZ^2$ at supercritical temperatures that works in constant time per vertex. To guarantee the success and efficiency of the Algorithm \ref{alg:LDFS}, we rely on the celebrated work of Beffara and Duminal-Copin \cite{MR2948685} in conjunction with the work of Alexander \cite{MR1626951}.

A major early sampling step for sampling the ferromagnetic Ising model with external field was Mossel and Sly's work on the rapid mixing of the Glauber dynamics provided that the temperature is bounded by a function of the maximum degree of the graph \cite{MosselSly13}. More recent developments establish rapid mixing of the Swendsen-Wang dynamics on stochastic partitioned graphs \cite{Park2017RapidMS} as well as a CFTP algorithm \cite{Feng2022SamplingFT} and a near linear time field dynamics algorithm \cite{MR4538128} for  sampling on bounded degree graphs with the external field bounded away from 1.

We present a perfect marginal sampler for the ferromagnetic Ising model on $\bZ^2$ with consistent external field that works in constant time per vertex with external field bounded away from 1. While this algorithm does not operate on general bounded degree graphs, it is perfect on an infinite graph and is exactly linear in the size of the window. The efficiency of the algorithm comes from the weak spatial mixing condition established by Schonmann and Schlosman \cite{SchonmannShlosman}.

\section{Spin systems}
We define spin systems on finite graphs; we will treat their extension to infinite graphs later.

Given a finite graph $G = (V,E)$ and $q \in \mathbb N$ we work with the set of configurations
$\Omega_V = [q]^V$, where $[q]=\{1,2,\ldots,q\}$.  When there is no room for confusion, we will write $\Omega$ instead of~$\Omega_V$.  Naturally, when $W\subset V$ we write $\Omega_W=[q]^W$ for the partial configurations restricted to~$W$.

For $\sigma \in \Omega$ and $W \subset V$ we denote the restriction of $\sigma$ to $W$ by~$\sigma_W$. For simplicity, when $W= \{v\}$ we will write $\sigma_v$ instead of $\sigma_{\{v\}}$. In this same situation, we denote the spin at $v$ by $i$, and write $\sigma_v=i$ rather than $\sigma_v=\{(v,i)\}$. For two configurations $\tau$ on $\Lambda_1$ and $\sigma$ on $\Lambda_2$ which agree on $\Lambda_1 \cap \Lambda_2$ we denote by $\tau \oplus \sigma$ the configuration on $\Lambda_1 \cup \Lambda_2$ which agrees with both $\tau$ and $\sigma$.

A spin system is defined by a `field' $b : [q] \to \mathbb R$ and a symmetric matrix $A: [q] \times [q] \to \mathbb R$ of interaction weights.  Given a graph~$G$ (finite at this stage) the Gibbs distribution for $G$ gives to each configuration~$\sigma$ the weight
\begin{align*}
  \prod_{v \in V} b (\sigma_v) \prod_{(u,v) \in E} A(\sigma_u,\sigma_v).
\end{align*}
Define the partition function $Z(G)$ to be the sum of all possible weights
\begin{align*}
  Z(G) := \sum_{\sigma \in \Omega} \prod_{v \in V} b (\sigma_v) \prod_{(u,v) \in E} A(\sigma_u,\sigma_v).
\end{align*}
Then we define the Gibbs measure to be the measure $\mu : \Omega \to [0,1]$ where
\begin{align}\label{eq:specification}
  \mu(\sigma) = \frac{\prod_{v \in V} b (\sigma_v) \prod_{(u,v) \in E} A(\sigma_u,\sigma_v)}{Z(G)}.
\end{align}
Frequently in this paper we will want to look at the marginal distribution induced on a set of vertices by fixing the configuration on the remainder of the vertices. For any $W$, define the marginal distribution of $\mu$ on $W$ by 
$$
\mu_W(\sigma)=\sum_{\sigma'\in\Omega:\sigma'_W=\sigma}\mu(\sigma'),
$$
for all $\sigma \in \Omega_W$.  Also, for any $\Lambda \subset V$ and $\tau \in \Omega_\Lambda$, define the marginal distribution with boundary condition $(\Lambda,\tau)$ by 
\begin{align}\label{eq:conditional}
  \mu_W^{(\Lambda,\tau)}(\sigma) :=
     \frac1{\mu_{\Lambda}(\tau)} \sum_{\substack{\sigma'\in\Omega:\\ \sigma'_\Lambda = \tau, \sigma'_W = \sigma}}\!\!\mu(\sigma'),
\end{align}
assuming $\mu_\Lambda(\tau)\not=0$.
We abbreviate $\mu_W^{(\Lambda,\tau)}$ to $\mu_W^\tau$ when the set $\Lambda$ is clear from the context.

With this definition of the Gibbs measure, we have a fixed probability space determined by a few parameters. We call this a spin system $\mathcal S = (G, q, b, A)$. In this paper we deal exclusively with spin systems in which the underlying graph~$G$ is the 2-D grid $\mathbb{Z}^2$, i.e., the graph with vertex and edge sets
\begin{align*}
    V_{\bZ^2} &= \bZ^2, \\
    E_{\bZ^2} &= \set{(u,v) \in V_{\bZ^2} : d(u,v) = 1},
\end{align*}
where $d(\cdot,\cdot)$ is Euclidean distance.
The restriction to the particular 2-D grid is not strictly necessary: with mild modifications to the sampling patterns and analysis the algorithm works on many planar lattices, but for clarity we have restricted our discussion to the most common planar lattice.  However, we shall see that the restriction to two dimensions is crucial.  

Fortunately, when describing and analysing the proposed algorithm, we only encounter finite subgraphs of $\bZ^2$. For these, the Gibbs distribution on configurations is defined by~(\ref{eq:specification}). As usual, the \emph{Gibbs property} of $\mu$ is important to us, and is the reason we can concentrate on finite subgraphs.  Suppose $W\subset V$ and that the configuration $\tau\in\Omega_{V\setminus W}$ is \emph{feasible}, i.e., $\mu_{V\setminus W}(\tau)>0$.  Define the \emph{boundary} of~$W$ by 
    $$
    \partial W=\{v\in V\setminus W:\{u,v\}\in E\text{ for some }u\in W\}, 
    $$
and let $\tau'\in\Omega_{\partial W}$ be the restriction of $\tau$ to $\partial W$.  Then 
\begin{equation}\label{eq:Gibbs}
\mu_W^{(V\setminus W,\tau)}=\mu_W^{(\partial W,\tau')};
\end{equation}
in other words, conditioning on the complement of $W$ is equivalent to conditioning on its boundary.

As we are claiming that our algorithm samples from the Gibbs measure on an infinite graph, namely the 2-D grid, we need to say a little about what that means.  A full treatment of `infinite volume' measures is given by Friedli and Velenik~\cite{friedli_velenik_2017}: see Section 3 and in particularly Subsection 3.3.  The bare bones are as follows.

If we drop the assumption that $V$ is finite, identity~(\ref{eq:Gibbs}) still enables us to calculate the marginal distribution $\mu_W^{(V\setminus W,\tau)}$ provided $W$ is finite.  The quantities 
$$\big\{\mu_W^{(V\setminus W,\tau)}:\text{$W\subset V$, $W$ finite, and $\tau\in\Omega_{V\setminus W}$}\big\}$$
provide a \emph{specification} of the infinite volume measure~\cite[Def.~6.11]{friedli_velenik_2017}.  (The various marginal distributions have to satisfy a consistency condition, but that is automatic here.)  The infinite volume measures we are interested in are ones that are consistent with the specification~\cite[Defn 6.12]{friedli_velenik_2017}.  There may be no, one or many such measures.  In our algorithm we only make use of marginal probabilities on finite subgraphs, which are known by~(\ref{eq:Gibbs}).  Therefore, a proof that our algorithm terminates with probability~1 can be interpreted as a proof that the infinite volume measure is unique, and the output of the algorithm can be interpreted as a construction of (the marginal distribution on a finite subset $U\subset V$ of) the infinite volume measure.

Finally, we say $A$ is a \emph{soft} interaction matrix if there exists $\alpha > 0$ and $i \in [q]$ such that for all $j \in [q], A_{ij} > \alpha$. In this paper we restrict ourselves to spin systems with soft interaction matrices. It is unnecessarily restrictive, but straightforward to handle and enough for our applications.

\subsection{Weak Spatial Mixing}
Note that in the following $d_G$ denotes graph distance. 

For efficiency of our algorithm we require weak spatial mixing, which intuitively tells us that, given sets $U \subset \bZ^2$ and $W \subset \bZ^2$, conditioning on $\sigma_W$ has little effect on the distribution of $\sigma_U$ so long as $W$ is far from $U$.  In order to connect smoothly with previous work, we use essentially the same definition as Martinelli, Olivieri and Schonmann~\cite[(1.11)]{Martinelli1994weakstrong}.  

\begin{definition}\label{def:wsm}
Let $\mathcal S$ be a spin system and $f : \mathbb N \to \mathbb R^+$ a function.  Given $W\subset V, U \subset V \setminus W,$ and $\tau^1,\tau^2\in\Omega_W$, if for all $W$ with $V\setminus W$ finite, all $U\subset V\setminus W$, and all feasible $\tau^1, \tau^2 \in \Omega_W$ with $\tau^1\not=\tau^2$,
\begin{align*}
  d_\mathrm{TV}\left(\mu_U^{\tau^1},\mu_U^{\tau^2} \right) \leq \sum_{u \in U, w \in W} f(d_G(u,w)),
\end{align*}
then we say that $\mathcal S$ exhibits \emph{set weak spatial mixing} with rate~$f$.
\end{definition}
From now on, we will drop the subscript $G$ in the graph distance $d_G$ as we are always working on $\mathbb Z^2$.

There are many variants of weak spatial mixing and it is critical we use a strong enough definition. To obtain mixing $O(n \log n)$ mixing time results of the random cluster model on $\bZ^d$, Gheissari and Sinclair require that $U, W$ be boxes \cite{gheissarisinclair2022}. In contrast, we need our sets to take any form. In Spinka's perfect sampling paper \cite{spinka2020finitary}, rather than our sum, the bound is 
\begin{align*}
  d_\mathrm{TV}\left(\mu_U^{\tau^1},\mu_U^{\tau^2} \right) \leq |U| f(d(U,W)),
\end{align*}
however in our case, this condition is not strong enough to ensure efficiency.

From this point on, weak spatial mixing will be with exponential rate, i.e., $f(x) = Ce^{-\gamma x}$.

\section{Martinelli, Olivieri, and Schonmann}
In their 1994 paper \cite{Martinelli1994weakstrong}, obtained a particular result on $\mathbb Z^2$ which showed that, for 2-d spin systems, weak spatial mixing implies strong spatial mixing for squares. Specifically, the result is as follows. Let $S \subset \bZ^2$ be a square of side length $L$, that is, for some vertex~$v_c$,
\begin{align*}
    S = \set{v \in \bZ^2 : v = v_c + (a,b), 0 \leq a,b \leq L}.
\end{align*}

\begin{theorem}[{\cite[Theorem 1.1]{Martinelli1994weakstrong}}]\label{MOStheorem}
    Let $\mathcal S$ be a spin system on $\bZ^2$ which exhibits weak spatial mixing. Then there exist constants $C, \gamma > 0$ such that for all $\tau_1, \tau_2 \in \Omega_{\partial S}$ and all $\Delta \subset S$
    \begin{align*}
        d_\mathrm{TV}\bra{\mu_\Delta^{\tau_1}, \mu_\Delta^{\tau_2}} \leq C e^{- \gamma \ell},
    \end{align*}
    where $\ell = \min \set{d(v,\Delta) : v \in \partial S, \tau_1(v) \not= \tau_2(v)}.$
\end{theorem}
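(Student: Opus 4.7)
A naive application of set WSM with $W=\partial S$, $U=\Delta$ yields
$$
d_\mathrm{TV}\bra{\mu_\Delta^{\tau_1},\mu_\Delta^{\tau_2}} \leq \abs{\Delta}\cdot\abs{D}\cdot C_0\, e^{-\gamma_0 \ell},
$$
where $D\subset\partial S$ is the disagreement set of $\tau_1$ and $\tau_2$ and $C_0, \gamma_0$ come from Definition~\ref{def:wsm}. The prefactor $\abs{\Delta}\cdot\abs{D}$ can grow polynomially with the side length $L$ of $S$, so this crude bound is too weak unless $\ell$ already dominates $\log L$: the whole point of restricting to \emph{squares} is to allow the $L$-dependent prefactor to be removed.

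The plan has three ingredients. \emph{First}, a hybrid reduction: by interpolating between $\tau_1$ and $\tau_2$ one side of $\partial S$ at a time and invoking the triangle inequality for total variation, it suffices to handle the case where $D$ is confined to a single side of $\partial S$, say the top edge. \emph{Second}, set up a multi-scale cascade. Pick an intermediate width $w$, to be chosen, and build nested rectangles $\Delta\subset R_0\subset R_1\subset\cdots\subset R_k\subseteq S$, each obtained from its predecessor by raising the top edge by $w$, until $R_k$ is flush against the top of $S$; take $k=\Theta(\ell/w)$. By the Gibbs property~(\ref{eq:Gibbs}), the marginal on $\Delta$ is determined by the configuration on $\partial R_0$, and by the data-processing inequality it suffices to bound $d_\mathrm{TV}\bra{\mu^{\tau_1}_{\partial R_0},\mu^{\tau_2}_{\partial R_0}}$. \emph{Third}, iterate downward: at scale $i$ apply set WSM inside $R_i$ to compare the marginals on $\partial R_{i-1}$, using the fact that the disagreements (all on the top edge of $S$) sit at graph distance at least~$w$ above $R_{i-1}$. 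Each step contributes a factor $e^{-\gamma_0 w}$ at the cost of a polynomial-in-$L$ prefactor; choosing $w=c\log L$ large enough so the polynomial is dominated by $e^{\gamma_0 w/2}$, each step is a contraction by at most $e^{-\gamma_0 w/2}$. Composing $k$ steps then produces the claimed bound $Ce^{-\gamma \ell}$ with $\gamma=\gamma_0/2$.

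The main obstacle lies in making the cascade honest. Each intermediate rectangle $R_i$ has three sides agreeing with $\partial S$ (where by the Step~1 reduction there are no disagreements) and one floating top side inside $S$, which is exactly why the square geometry buys us something: if $D$ were allowed to sit on a more complicated subset of $\partial S$, the intermediate rectangles would have floating sides pressed up against disagreement regions and the cascade would fail. The bookkeeping issues are (i)~verifying that the $|\partial R_i|$-sized prefactors coming from the set-WSM step at each scale really are dominated by $e^{\gamma_0 w/2}$ with $w=c\log L$, and (ii)~checking that conditioning on $\partial R_i$ via~(\ref{eq:Gibbs}) commutes cleanly with the iteration, so that errors add across scales rather than compounding multiplicatively. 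Once these are in hand, the constants $C,\gamma$ depend only on the WSM parameters $C_0,\gamma_0$, not on $L$ or $\abs{\Delta}$.
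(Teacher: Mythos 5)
First, note that the paper does not prove Theorem~\ref{MOStheorem} at all: it is quoted from Martinelli--Olivieri--Schonmann, and the in-paper analogue against which your argument should be measured is Proposition~\ref{weakprop}, whose proof is an iterated \emph{coupling} over concentric $\ell_\infty$-frames around the disagreement site, not a multi-scale cascade of rectangles. Your proposal has a genuine gap at its central step. The estimate ``each step contributes a factor $e^{-\gamma_0 w}$ at the cost of a polynomial-in-$L$ prefactor'' is not available under Definition~\ref{def:wsm}: set weak spatial mixing bounds the total variation distance by $\sum_{u\in U,\,w\in W} f(d(u,w))$, where the sum runs over \emph{all} of the conditioned region $W$, not only over the sites where the two boundary conditions disagree. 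Discarding the agreeing spins is exactly the strong spatial mixing property that the theorem is trying to establish, so assuming it is circular. In your cascade the set you must control at scale $i$ (the floating top side of $R_{i-1}$, or $\partial R_{i-1}$ itself) meets $\partial S$, so the WSM sum contains pairs at distance $O(1)$ and is bounded below by a constant independent of $w$; no per-step contraction comes from WSM alone. This is precisely the obstruction that forces the $A_i/B_i$ split in the proof of Proposition~\ref{weakprop}: each frame is cut into a part $B_i$ kept at constant distance $h$ from the boundary, where Lemma~\ref{simplefact} makes the WSM sum at most $\tfrac12$, and constant-size `feet' $A_i$ hugging the boundary, which are coupled by brute force using the soft-interaction assumption (agreement probability $\alpha^{4h}$ under independent sampling). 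Your plan contains no mechanism for the near-boundary vertices, and without one (softness, or a finite-energy argument as in MOS) the cascade does not contract.

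There are two further problems. Geometrically, $\Delta$ is an arbitrary subset of $S$ constrained only to be far from the \emph{disagreeing} boundary vertices; it may touch the top edge of $S$ far horizontally from the disagreement set, in which case no nested rectangle obtained by raising the top edge separates $\Delta$ from the disagreements, even though $\ell$ is large and the claimed bound is nontrivial. Concentric frames centred on the disagreement, as in Proposition~\ref{weakprop}, handle this; horizontal slabs do not. Quantitatively, the choice $w=c\log L$ makes the argument vacuous whenever $\ell$ is below $c\log L$ (for instance $\ell=\sqrt{\log L}$), a regime in which the theorem still asserts a bound $Ce^{-\gamma\ell}$ with constants independent of $L$, while your fallback single application of WSM carries a $\mathrm{poly}(L)$ prefactor that cannot be absorbed. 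The frame-coupling proof avoids any $\log L$ scale: the buffer width $h$ is an absolute constant, each frame attempt succeeds with constant probability, and iterating over $\Theta(\ell)$ (or, in Proposition~\ref{weakprop}, $L^{1/4}$) frames yields the exponential decay directly, with all constants independent of $L$, $\Delta$, and the number of disagreement sites.
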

Here, $d(v,\Delta)=\min\{d(v,u):u\in\Delta\}$.  Martinelli et al.\  also note that this result is not specific to $\bZ^2$ --- it can be extended to other regular 2-d graphs --- however there are counterexamples to show that the restriction to 2 dimensions is essential~\cite{MR1269387}. 

We will not need Theorem \ref{MOStheorem} for the purposes of our algorithm; we only need to adapt the intermediate Proposition~2.1 from the same paper \cite{Martinelli1994weakstrong}. We cannot directly apply the proposition, because our algorithm will construct partial configurations which do not restrict themselves to squares around vertices. 

Consider a square $S \subset \mathbb Z^2$ of side length $2L$ (thus made of 4 squares of side length $L$). To be precise, for some $v_c \in \bZ^2$,
\begin{align*}
    S &= \set{v \in \bZ^2 : v = v_c + (a,b), 0 < a,b < 2L}, \\
    S_0 &= \set{v \in \bZ^2 : v = v_c + (a,b), 0 < a,b < L}, \\
    S_1 &= \set{v \in \bZ^2 : v = v_c + (L + a,b), 0 < a,b < L}, \\
    S_2 &= \set{v \in \bZ^2 : v = v_c + (a,L + b), 0 < a,b < L}, \\
    S_3 &= \set{v \in \bZ^2 : v = v_c + (L + a,L + b), 0 < a,b < L}.
\end{align*}
The notation $\partial S$ will denote the usual boundary of~$S$, and we will also consider the `bisected' boundary of~$S$,
\begin{align*}
    \boxplus S = \partial S_0 \cup \partial S_1 \cup \partial S_2 \cup \partial S_3.
\end{align*}

In analysing our algorithm, we will encounter boundary conditions that fix spins on the whole of $\partial S$ and on a subset of $\boxplus S\setminus \partial S$. Such a boundary condition is specified by a partial configuration $(T,\tau)$, where $\partial S\subseteq T\subseteq \boxplus S$ and $\tau\in\Omega_T$.

\begin{prop} \label{weakprop}
Let $\mathcal S$ be a $q$-spin system on $\mathbb Z^2$ which exhibits weak spatial mixing with constants $C, \gamma > 0$. Then there exist $L_0 \in \mathbb N$ and $\gamma_0 > 0$ such that the following holds:  for all squares $S \subset \mathbb Z^2$ with side length $2L > 2L_0$, all $y \in \partial S$, all $T$ with $\partial S\subseteq T\subseteq \boxplus S$, and all feasible partial configurations $\tau, \tau'\in\Omega_T$ that agree except at vertex~$y$,
\begin{align*}
    d_\mathrm{TV} \left(\mu_{Q}^\tau, \mu_{Q}^{\tau'}\right) \leq e^{- \gamma_0 L^{1/4}},
\end{align*}
where
\begin{align*}
    Q := \set{x \in S : \|x - y\| \geq L^\frac12}.
\end{align*}
\end{prop}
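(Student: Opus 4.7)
The plan is to adapt the shell-by-shell coupling argument of Martinelli--Olivieri--Schonmann's Proposition~2.1 \cite{Martinelli1994weakstrong} to the setting of a partially-specified interior cross. Without loss of generality, place $y$ on the top edge of $\partial S$. Set $K=\lfloor L^{1/4}\rfloor$ and, for $k=1,\ldots,K$, define a nested family of half-boxes $B_1\subset B_2\subset\cdots\subset B_K$ centred at $y$, where $B_k$ has half-width $kL^{1/4}$ and extends into $S$. Then $B_K$ has half-width $L^{1/2}$, so $Q\subseteq S\setminus B_K$. The key geometric observation is that for $L$ sufficiently large, $L^{1/2}\ll L$, so every $B_k$ lies in the half of $S$ adjacent to $y$ and is disjoint from $\boxplus S\setminus\partial S$; in particular $T\cap B_k=\partial S\cap B_k$, and no fragment of the interior cross threads any shell.

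The construction is a coupling $(\sigma,\sigma')$ of $\mu^\tau$ and $\mu^{\tau'}$ built shell by shell. Let $\delta_k$ denote the failure probability that $\sigma$ and $\sigma'$ agree on $\partial B_k\cap S$, with $\delta_0=0$. For the inductive step, condition on the coupling having succeeded on $\partial B_{k-1}\cap S$; by the Gibbs property~(\ref{eq:Gibbs}), the two conditional distributions on the annulus $B_k\setminus B_{k-1}$ carry boundary data that coincide on $\partial B_{k-1}\cap S$ (the coupled shell) and on $T\setminus B_k$, and differ only at $y\in\partial S$. Since the graph distance from $y$ to $\partial B_k\cap S$ is at least $L^{1/4}$ and $|\partial B_k\cap S|=O(L^{1/4})$, a finite-volume application of weak spatial mixing yields
\[
  d_{\mathrm{TV}}\!\left(\mu^\tau_{\partial B_k\cap S},\mu^{\tau'}_{\partial B_k\cap S}\right)\;\le\;O(L^{1/4})\,e^{-\gamma L^{1/4}}.
\]
An optimal coupling of these marginals gives $\delta_k\le\delta_{k-1}+O(L^{1/4})\,e^{-\gamma L^{1/4}}$.

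Summing the per-shell failures over $K=\lfloor L^{1/4}\rfloor$ iterations,
\[
  \delta_K\;\le\;O(L^{1/2})\,e^{-\gamma L^{1/4}}\;\le\;e^{-\gamma_0 L^{1/4}}
\]
for any $\gamma_0<\gamma$ and $L$ sufficiently large. Agreement on the separator $\partial B_K\cap S$ then extends, again by the Gibbs property, to agreement on $S\setminus B_K\supseteq Q$, delivering $d_{\mathrm{TV}}(\mu^\tau_Q,\mu^{\tau'}_Q)\le e^{-\gamma_0 L^{1/4}}$.

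\textbf{Main obstacle.} The delicate point is the correct invocation of WSM at each shell. The hypothesis supplies WSM only on all of $\bZ^2$ with a cofinite fixed region, whereas each inductive step demands a finite-volume bound whose boundary data comprises the coupled inner shell, the partial configuration $T$ outside $B_k$, and the disagreement at $y$. Verifying that the exponential rate $\gamma$ survives this finite-volume specialisation (rather than being dulled by the extra agreed-upon boundary), and that the polynomial pre-factor remains $O(L^{1/4})$, is the heart of the argument. The geometric constraint $T\cap B_k=\partial S\cap B_k$, stemming from $L^{1/2}\ll L$, is what makes this clean: no portion of the interior cross sits inside any shell, so the only genuinely-disagreeing piece of the local boundary seen by each WSM application is $y$ itself.
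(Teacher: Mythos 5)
The inductive step is exactly where your argument breaks. At shell $k$ you invoke ``a finite-volume application of weak spatial mixing'' to conclude $d_{\mathrm{TV}}\bigl(\mu^{\tau}_{\partial B_k\cap S},\mu^{\tau'}_{\partial B_k\cap S}\bigr)\le O(L^{1/4})\,e^{-\gamma L^{1/4}}$, on the grounds that the only disagreeing boundary site is $y$, at distance at least $L^{1/4}$. That is a strong-spatial-mixing bound, not a weak one. Under Definition~\ref{def:wsm} the bound is $\sum_{u\in U,\,w\in W}f(d(u,w))$ summed over \emph{all} conditioned sites $w$, whether or not the two boundary conditions agree there; and since $y\in\partial S$, every shell $\partial B_k\cap S$ has ``feet'' at distance $O(1)$ from fixed spins on $\partial S$ (and, when $y$ lies within about $L^{1/2}$ of the midpoint of a side, from the cross $\boxplus S\setminus\partial S$ as well, so your geometric claim $T\cap B_k=\partial S\cap B_k$ also fails in general). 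Those nearby pairs contribute $\Theta(1)$ to the WSM sum, so weak spatial mixing alone cannot make the per-shell coupling failure probability exponentially small, and your union bound $\delta_K\le O(L^{1/2})e^{-\gamma L^{1/4}}$ has no support. Discounting boundary sites at which the two conditions agree is precisely the content of strong spatial mixing, which is what the proposition is in the business of deriving, not assuming.

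The paper's proof confronts exactly this obstacle. Each frame $D_i$ is split into a far part $B_i$, at distance more than a constant $h$ from $\boxplus S$, where Lemma~\ref{simplefact} shows the full WSM sum is at most $\tfrac12$ --- a constant, not an exponentially small quantity --- and feet $A_i$ of size $O(h)$, which are coupled crudely by sampling the two configurations independently; softness of the interaction matrix gives agreement there with probability at least $\alpha^{4h}$. The per-shell success probability is therefore only the constant $\tfrac12\alpha^{4h}$, and the final bound $e^{-\gamma_0 L^{1/4}}$ comes from multiplying constant failure probabilities over the $L^{1/4}$ shells (conditioning on not yet having coupled), rather than from summing exponentially small per-shell failures. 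Your proposal never uses the soft-interaction hypothesis, which is a telltale sign that the near-boundary feet have not been handled; to repair it you would need an analogue of the $A_i/B_i$ split, a crude coupling at the feet, and the product (rather than union-bound) structure over shells.
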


Before proving the result, we run through a simple 1-d analogue of the proof.  Recall that for two measures $\lambda$ and $\mu$ on the same state space and random variables $X\sim\lambda$ and $Y\sim\mu$, if we can find a coupling $(X,Y)$ of the random variables such that $\bP[X \not= Y] \leq \epsilon$, then
\begin{align*}
    d_\mathrm{TV} (\mu, \lambda) \leq \epsilon.
\end{align*}
It follows that to bound the total variation distance between two measures, we need only find a coupling which agrees to an appropriate extent.  For the convenience of readers who are familiar with~\cite{Martinelli1994weakstrong}, we adopt some of the notation used there.  Note that there is no particular significance in the quantities $L$, $L^{\frac12}$ and $L^{\frac14}$ beyond the fact that the first grows more quickly than the second, and the second more quickly than the third.


Consider a 2-spin system $\mathcal S$ on $S=\{-L,...,L\} \subset \bZ$, with $L^{\frac14}$ integral, which exhibits weak spatial mixing with constants $C, \gamma$. Suppose we want to determine the influence of a fixed spin at 0 on the set 
\begin{align*}
    Q := \set{x \in S : \abs{x} \geq L^\frac12}.
\end{align*}
We can do so by building two configurations on $S$ which we attempt to couple at each step; that is, we try to ensure that the two configurations are equal sufficiently far from~0.  We denote the two configurations by $\sigma^+$ and $\sigma^-$, with boundary conditions  $\sigma^+_{-L}=\sigma^-_{-L}={+}$, $\sigma^+_L=\sigma^-_L={+}$, and $\sigma^+_0={+}$ and $\sigma^-_0={-}$. We sample $\sigma^+$ and $\sigma^-$ spin-by-spin, starting at the origin, trying to make them agree, while at the same time ensuring that both $\sigma^+$ and $\sigma^-$ have the correct Gibbs distribution.  If we have ever succeeded in coupling the two configurations then the Markov property ensures we can always maintain our coupling. Thus we only need to analyse the case where we have not already coupled.


We will attempt to couple every $L^{\frac14}$ spins. Let $D_i=\{-iL^{\frac14},iL^{\frac14}\}$.  Suppose that $\sigma^+_{D_{i-1}}\not=\sigma^-_{D_{i-1}}$, i.e., we have not coupled at the previous step.  Since $\cS$ exhibits weak spatial mixing, we can couple $\sigma^+_{D_i}$ with $\sigma^-_{D_i}$ so that
\begin{align*}
    \bP \sbra{\sigma^+_{D_i} \not= \sigma^-_{D_i}} < C e^{- \gamma L^{1/4}}, 
\end{align*}
By iterating this procedure, it follows that at step $n=L^{\frac14}$, the probability we have not coupled is bounded:
\begin{align*}
    \bP \sbra{\sigma^+_{D_n} \not= \sigma^-_{D_n}} < C^ne^{- \gamma nL^{1/4}}=C^{L^{1/4}}e^{- \gamma L^{1/2}}.
\end{align*}
By choosing $L$ sufficiently large, we can ensure that coupling has taken place with probability at least $\frac12$, in which case we can make $\sigma^+$ and $\sigma^-$ agree on~$Q$.  The following proof for the 2-d situation is more involved, but is similar.

\begin{proof}[Proof of Proposition~\ref{weakprop}]
Assume again that $L^\frac14$ is integral. Taking our cue from the warm-up 1-d argument,  we will build a sequence of frames $D_i \subset S$ around $y$, each $L^\frac14$ further from $y$ than the previous. We will then construct configurations $\sigma, \sigma'$, starting with $\tau$ and $\tau'$ respectively, and with each of these frames we will attempt to couple the two configurations. The probability of us coupling will give the bound on $d_\mathrm{TV} \bra{\mu_{Q}^\tau, \mu_{Q}^{\tau'}}$.

Let $h$ be a sufficiently large number, to be chosen later.  Define, for $i = 0,...,L^\frac14$:
\begin{align*}
    Q_i &= \set{x \in  S : \|x - y\|_\infty > iL^\frac14}, \\
    D_i &= \set{x \in  S : \|x - y\|_\infty = iL^\frac14}, \\
    A_i &= \set{x \in D_i : \min_{z \in \boxplus S} \|x - z\|_\infty \leq h }, \\
    B_i &= D_i \setminus A_i.
\end{align*}

\begin{figure}[h]
\centering
\includegraphics[width=0.75\textwidth]{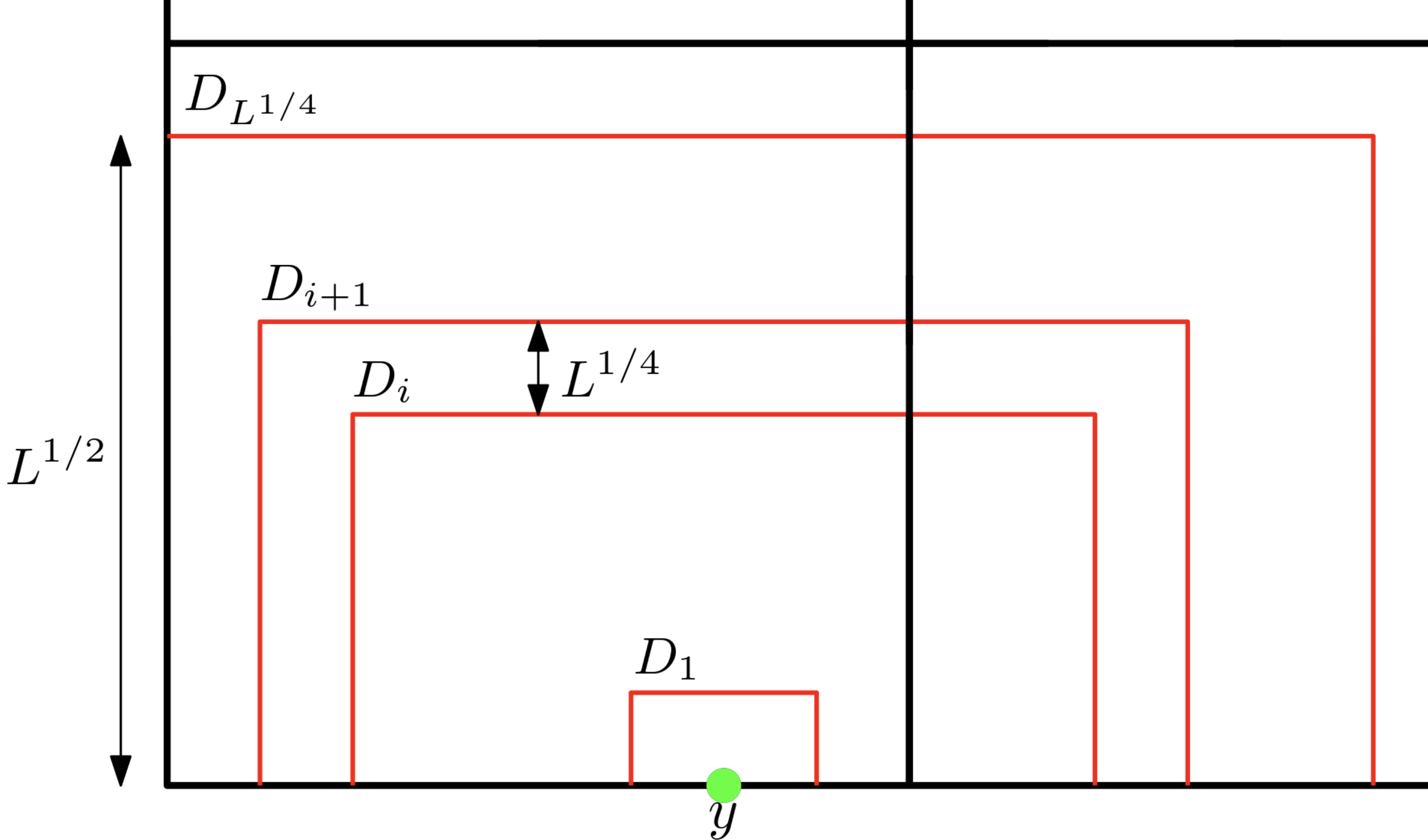}
\caption{In black $\boxplus S$, the domain of $\tau$ and $\tau'$. In green the vertex $y$ on which $\tau$ and $\tau'$ differ. In red the $D_i$'s, i.e. the $L^\frac14$ frames on which we will iteratively attempt to couple $\sigma$ and $\sigma'$. Disclaimer: this image is indicative, not to scale.}
\end{figure}

The $D_i$'s will be our frames. They are split into the $A_i$'s, the pieces close to the boundary, and the $B_i$'s, the pieces far from the boundary. The $B_i$'s are the nice vertices which are far enough from $\boxplus S$ that weak spatial mixing applies. (The parameter $h$ will be chosen so that is the case.)  The $A_i$'s are the unpleasant vertices, for which we have no nice trick, but which are few enough in number that we can deal with them quite crudely.  We will attempt to couple these two sets separately. Note that, again, once we have matched the configurations on the $i$th frame, we can always extend the coupling to the whole of $Q_i$ due to the Markov property and the identical boundaries of $Q_i$ on both configurations. Thus, we will only discuss the case where we have failed at every previous step.

Consider the sets $A_i$ and $B_i$. For a particular $h$, we may choose $L_0$ large enough so that $L_0^\frac14 > 2h$. Then for all but at most one $i_0$, the $A_i$'s will be made of (at most) four `feet' of size $h$ which are the only pieces of the $D_i$'s  close to $\boxplus S$. We will skip the attempt at coalescence for $i_0$ and we will still be able to succeed with probability arbitrarily close to one.

Now, we attempt to couple by first dealing with the $B_i$'s and then the $A_i$'s. For the $B_i$'s, we will need the following technical lemma, which tells us that the influence on $B_i$ of the fixed partial configuration is bounded, independently of $h$ and~$L$, so long as $h$ and $L$ are chosen large enough.

\begin{figure}[h]
\centering
\includegraphics[width=0.9\textwidth]{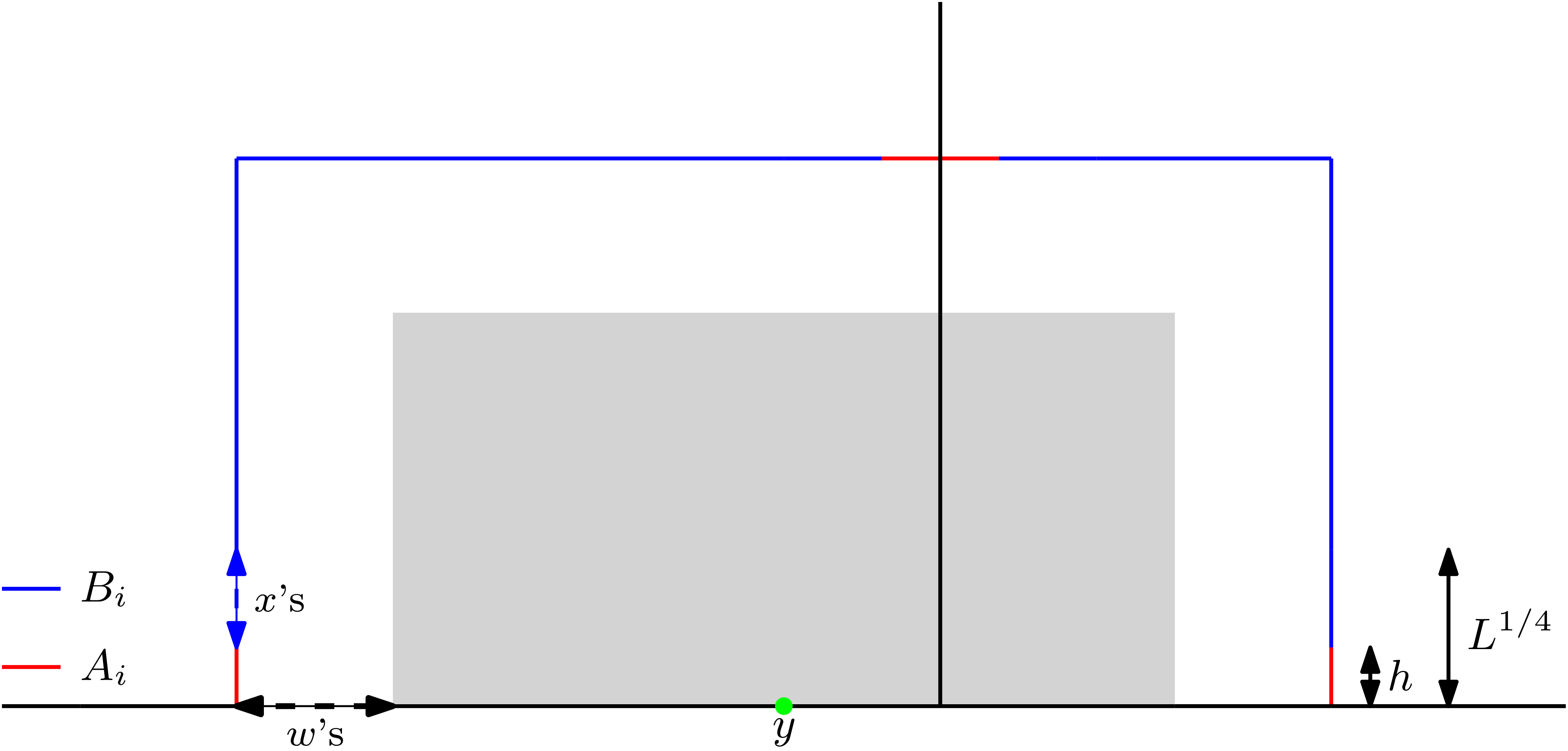}
\caption{In black, the vertices in $\boxplus S$. In grey, $Q_{i-1}^c$, i.e. the vertices which have been fixed once we reach the $i$th frame. In red, $A_i$, the vertices of $D_i$ which are too close to the boundary to leverage weak spatial mixing. In blue, $B_i$, the vertices in $D_i$ which are far from $\boxplus S$. Marked by arrows, the $x's$ and $w$'s which form $E$. The $x$'s are the vertices in $B_i$ that are within $L^\frac14$ of $\boxplus S$ and the $w$'s are the vertices in $\boxplus S$ which are within $L^\frac14$ of $D_i$.}
\label{fig:oneiter}
\end{figure}

\begin{lemma} \label{simplefact}
Under the conditions of Proposition~\ref{weakprop}, there are $h, L_0$ such that for all $L > L_0$, and for all $i = 1,...,L^\frac14$ with $ i \not= i_0$:
\begin{align*}
    C\!\!\! \sum_{\substack{x \in B_i, \\ w \in (\boxplus S \cup D_{i-1})}} e^{-\gamma d(x,w)} \leq \frac12.
\end{align*}
This will give us a coupling of $\mu_{B_i}^{\tau \oplus \sigma_{D_{i-1}}}$ and $\mu_{B_i}^{\tau' \oplus \sigma'_{D_{i-1}}}$ which agrees with probability greater than $\frac12$.
\end{lemma}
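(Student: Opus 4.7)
The plan is to split the sum by the location of $w$:
\begin{equation*}
  \sum_{x \in B_i,\, w \in \boxplus S \cup D_{i-1}} e^{-\gamma d(x,w)} = \sum_{x \in B_i,\, w \in \boxplus S} e^{-\gamma d(x,w)} + \sum_{x \in B_i,\, w \in D_{i-1}} e^{-\gamma d(x,w)}.
\end{equation*}
I would choose $h$ first so as to make the first sum at most $1/(4C)$ (a choice that will turn out to be independent of $L$), and then choose $L_0$ large enough to make the second sum at most $1/(4C)$ for all $L > L_0$. Scaling by $C$ then yields the asserted bound~$\tfrac12$.

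For the $\boxplus S$-piece, I would exploit the essentially one-dimensional structure of both $\boxplus S$ and $B_i$. Because $\boxplus S$ is a union of $O(1)$ lattice line segments, the standard estimate $\sum_{w \in \ell} e^{-\gamma d(x,w)} = O(e^{-\gamma d(x,\ell)})$ for each segment $\ell$ gives
\begin{equation*}
  \sum_{w \in \boxplus S} e^{-\gamma d(x,w)} \;\le\; K\, e^{-\gamma d(x,\boxplus S)}
\end{equation*}
for an absolute constant $K$. The hypothesis $i \ne i_0$ is what ensures that $A_i$ consists of at most four ``feet'' of size $h$ (the places where the sides of $D_i$ meet $\partial S$ or meet the internal cross of $\boxplus S$), and hence $B_i$ decomposes as a union of $O(1)$ line segments along each of which $d(\cdot,\boxplus S)$ grows linearly starting from $h+1$. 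Summing $e^{-\gamma d(x,\boxplus S)}$ along such a structure is a convergent geometric series of total mass $O(e^{-\gamma h})$, independent of $L$.

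For the $D_{i-1}$-piece, consecutive frames are separated by $d(B_i, D_{i-1}) \ge L^{1/4}$. Applying the same line-segment estimate with $D_{i-1}$ in place of $\boxplus S$ gives $\sum_{w \in D_{i-1}} e^{-\gamma d(x,w)} = O(e^{-\gamma L^{1/4}})$ uniformly in $x \in B_i$; combining with the crude bound $|B_i| \le |D_i| = O(L^{1/2})$ yields a total of $O(L^{1/2} e^{-\gamma L^{1/4}})$, which tends to~$0$. The coupling consequence is then a standard chain: the sum bounds $d_\mathrm{TV}\bigl(\mu_{B_i}^{\tau \oplus \sigma_{D_{i-1}}},\, \mu_{B_i}^{\tau' \oplus \sigma'_{D_{i-1}}}\bigr)$ via weak spatial mixing (Definition~\ref{def:wsm}) with $U = B_i$ and $W \subseteq \boxplus S \cup D_{i-1}$ the domain of the partial boundary, and the coupling lemma produces a coupling agreeing with probability at least~$\tfrac12$.

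The main obstacle is the $\boxplus S$-estimate. The naive bound $|\boxplus S| \cdot e^{-\gamma h} = \Theta(L)\cdot e^{-\gamma h}$ would force $h$ to grow like $\log L$, defeating the purpose of having $h$ depend only on the spin system. What saves us is a \emph{double} dimensional collapse: first along $\boxplus S$ for fixed $x$ (reducing the sum to $e^{-\gamma d(x,\boxplus S)}$), and then along $B_i$ (reducing the outer sum to a geometric series with tail $O(e^{-\gamma h})$). The exclusion of $i_0$ is precisely what preserves the linear, one-dimensional structure of $B_i$: at $i=i_0$ the frame $D_i$ would intersect $\boxplus S$ in an extended region rather than in a few isolated feet, and the geometric argument would fail.
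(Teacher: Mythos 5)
Your decomposition (splitting according to whether $w\in\boxplus S$ or $w\in D_{i-1}$, then collapsing the inner sum over $w$ to $O(e^{-\gamma d(x,\cdot)})$) differs in packaging from the paper's (which splits according to whether $d(x,w)\le L^{1/4}$), and your treatment of the $D_{i-1}$-piece is fine --- in fact slightly sharper than the paper's crude count of $13L^2$ terms. The gap is in the $\boxplus S$-piece, namely the claim that $B_i$ is a union of $O(1)$ segments along each of which $d(\cdot,\boxplus S)$ grows linearly starting from $h+1$. A side of $D_i$ can run \emph{parallel} to a line of $\boxplus S$ at \emph{constant} distance, and that distance can be as small as $h+1$. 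Concretely, let $y$ lie on the bottom side of $\partial S$ at horizontal distance $D\le L^{1/2}$ from the vertical bisector of $S$, and suppose $iL^{1/4}-D=h+1$ for some $i\le L^{1/4}$. Then an entire vertical side of $D_i$, of length up to $L^{1/2}$, sits at distance exactly $h+1$ from $\boxplus S$; apart from its foot near $\partial S$ it belongs to $B_i$ (it is \emph{not} within $h$ of $\boxplus S$, so the exclusion $i\ne i_0$ does not remove it), and after your inner collapse it contributes $\Theta\bigl(L^{1/2}e^{-\gamma h}\bigr)$, which is not bounded independently of $L$ for fixed $h$. The same configuration arises with a side of $\partial S$ when $y$ is within roughly $L^{1/2}$ of a corner. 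A smaller, fixable imprecision of the same kind: the side of $D_i$ parallel to the side of $\partial S$ through $y$ is at constant distance $iL^{1/4}$, not linearly growing from $h+1$; its contribution $O(L^{1/2}e^{-\gamma L^{1/4}})$ is harmless but must be absorbed by the choice of $L_0$ (i.e.\ routed through the second half of your estimate), not by the $e^{-\gamma h}$ half.

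For comparison, the paper bounds the close pairs by partitioning them into L-shaped regions local to the at most three points $z$ where $D_i$ meets $\boxplus S$ and writing $d(x,w)=d(x,z)+d(w,z)$; that identity holds only when $x$ and $w$ sit on transversal arms through $z$, so it rests on exactly the premise you assert --- that every close approach of $B_i$ to $\boxplus S$ occurs at such an intersection. In other words, your proposal is essentially the paper's estimate in different clothing, and it inherits the same unaddressed parallel-near-miss configuration; as stated, with $h$ and $L_0$ constants, the displayed inequality can fail there. The natural repair, for either version, is to widen the skipping rule: also skip (as with $i_0$) the $O(1)$ indices $i$ for which a side of $D_i$ runs parallel to a line of $\boxplus S$ at distance between $h$ and, say, $L^{1/4}/3$. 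Since consecutive frames are $L^{1/4}$ apart and at most one line of $\boxplus S$ other than the side through $y$ can come within $L^{1/2}$ of $y$, only a bounded number of indices are discarded, and the iteration in the proof of Proposition~\ref{weakprop} goes through unchanged. If you add this (or an equivalent) exclusion and route the constant-distance stretches at distance $\ge L^{1/4}$ through your $L_0$-dependent bound, your argument is correct and genuinely a touch cleaner than the paper's.
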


\begin{proof}
We split the sum into two pieces.  Let
\begin{align*}
    P &= \set{(x,w) \in B_i \times (\boxplus S \cup D_{i-1}) : d(x,w) \leq L^\frac14}\\
    \noalign{\noindent and}
    P^c &= B_i \times (\boxplus S \cup D_{i-1})\setminus P.
\end{align*}
Then
\begin{align*}
    \sum_{\substack{x \in B_i, \\ w \in (\boxplus S \cup D_{i-1})}} e^{-\gamma d(x,w)} =  \sum_{(x,w) \in P} e^{-\gamma d(x,w)} + \sum_{(x,w) \in P^c} e^{-\gamma d(x,w)}.
\end{align*}

Note that $D_i$ intersects $\partial S$ twice, and $\boxplus S\setminus\partial S$ at most once.  So $P$ can be partitioned into at most three subsets, each local to some intersection.  Let $z$ be one of the intersections between $D_i$ and $\boxplus S$.  We can further partition the pairs $(x,w)$ in $P$ according to whether $x$ and $w$ lie left/right or above/below~$z$.  In this way, we partition $P$ into at most eight (broken) L-shaped pieces $\{P_1,\ldots,P_8\}$, each local to some intersection.  Let $P_m$ be such an L-shaped region local to $z$.  (Refer to Figure~\ref{fig:oneiter}.)  Then we have

$$
\sum_{(x,w) \in P_m} e^{-\gamma d(x,w)}
=\sum_{(x,w) \in P_m} e^{-\gamma(d(x,z)+d(w,z))}
\leq \sum_{j=0}^\infty \sum_{k=0}^\infty e^{-\gamma (j+k + h)} 
= \frac{e^{-\gamma h}}{(1 - e^{-\gamma})^2}.
$$
Here, the summations run over $j=d(x,z)-h$ and $k=d(w,z)$.  Thus 
$$
    C \sum_{(x,w) \in P} e^{-\gamma d(x,w)} = C \sum_{m=1}^8\sum_{(x,w) \in P_m} e^{-\gamma d(x,w)}
    \leq \frac{8C e^{-\gamma h}}{(1 - e^{-\gamma})^2}.
$$
We can take $h$ large enough to bound the above by $\frac1{4}.$

For $P^c$ we note that every term in the sum has $d(x,w) \geq L^\frac14$ and the number of terms is bounded by $13L^2$, so
\begin{align*}
    C \sum_{(x,w) \in P^c} e^{-\gamma d(x,w)} < 13 C L^2 e^{-\gamma L^{1/4}}.
\end{align*}
Then taking $L_0$ large enough, we can bound this term by $\frac14$.

Now with our choice of $h$ and any $L > L_0$ we have that
\begin{align*}
    C\!\! \sum_{\substack{x \in B_i, \\ w \in (\boxplus S \cup D_{i-1})}} e^{-\gamma d(x,w)} < \frac1{4} + \frac14 = \frac12.
\end{align*}
This completes the proof of Lemma~\ref{simplefact}.
\end{proof}

Now we describe the iterated coupling attempts. Suppose at step $i$ we have thus far failed to couple $\sigma$ and $\sigma'$ outside $D_{i-1}$.

For the $B_i$'s, it is critical that we have our definition of weak spatial mixing, since it gives us that for every $i = 1,...,L^\frac14$,
\begin{align*}
    d_\mathrm{TV}\left(\mu_{B_i}^{\tau \oplus \sigma_{D_{i-1}}}, \mu_{B_i}^{\tau' \oplus \sigma'_{D_{i-1}}} \right) < C\!\! \sum_{\substack{x \in B_i, \\ w \in (\partial S \cup B_{i-1})}} e^{-\gamma d(x,y)}.
\end{align*}
Then taking $h$ and $L_0$ from \ref{simplefact}, it follows immediately that
\begin{align*}
    d_\mathrm{TV}\left(\mu_{B_i}^\tau, \mu_{B_i}^{\tau'} \right) < \frac12.
\end{align*}
It follows that 
we can choose a coupling on $B_i$ so that
\begin{align*}
    \mathbb P \left[\sigma_{B_i} = \sigma'_{B_i} \right] \geq \frac12.
\end{align*}

For the $A_i$, we are too close to use weak spatial mixing, so we will use a cruder method, sampling $\sigma_{A_i}$ and $\sigma'_{A_i}$ independently. Due to our restriction to soft interaction matrices, each time we attempt to couple, we have for some constant $\alpha>0$,
\begin{align*}
    \mathbb P \left[\sigma_{A_i} = \sigma'_{A_i} \right] \geq \alpha^{|A_i|} = \alpha^{4h}.
\end{align*}

Now, combining the two pieces, we find that on every iteration where we have not succeeded, we couple with probability
\begin{align*}
    \mathbb P \left[\sigma_{D_i} = \sigma'_{D_i} \right] \geq \frac12 \alpha^{4h}.
\end{align*}
It follows that
\begin{align*}
    d_\mathrm{TV} \left(\mu_{Q}^\tau, \mu_{Q}^{\tau'}\right) \leq \mathbb P \left[\sigma_{D_{L^{1/4}}} = \sigma'_{D_{L^{1/4}}} \right] < \left(1 -  \frac12 \alpha^{4h} \right)^{L^{1/4}} < e^{- \frac12 \alpha^{4h} L^{1/4}}.
\end{align*}
Which completes the proof of Lemma~\ref{simplefact}.
\end{proof}

We stress again that requiring soft interaction matrices is unnecessarily restrictive. It is simple and satisfactory for the chosen applications, but so long as the spin system has some constant probability of coupling $\sigma_{A_i}$ and $\sigma'_{A_i}$, the argument still holds.

For the purposes of the algorithm we will not have two configurations that differ on only one vertex, but luckily we can fix this easily. First define
\begin{align*}
    \Delta := \set{x \in S : \forall y \in \partial S \textrm{ such that } \tau_y \not= \tau_y', \|x - y\| \geq L^\frac12}.
\end{align*}

\begin{corollary} \label{weakcor}
Let $\mathcal S$ be a $q$-spin system on the integer lattice $\mathbb Z^2$ which exhibits weak spatial mixing. Then there exists $L_0 \in \mathbb N$ and a positive constant $\gamma_0$ such that for all squares $S \subset \mathbb Z^2$ with side length $2L > 2L_0$, all sets $T$ with $\partial S\subseteq T\subseteq\boxplus S$, and all  configurations $\tau, \tau' \in \Omega_T$ which agree on $\boxplus S\setminus\partial S$,
\begin{align*}
    d_\mathrm{TV} \left(\mu_{\Delta}^\tau, \mu_{\Delta}^{\tau'}\right) \leq 8L e^{- \gamma_0 L^{1/4}}.
\end{align*}
\end{corollary}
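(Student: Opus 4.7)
The plan is to prove the corollary by a standard path-coupling / hybrid argument, interpolating between $\tau$ and $\tau'$ one vertex at a time and applying Proposition~\ref{weakprop} at each step. Since $\tau$ and $\tau'$ agree on $\boxplus S\setminus\partial S$, any disagreement is confined to $\partial S$, and the number of disagreement vertices is at most $|\partial S| \leq 8L$. Enumerate the disagreement vertices as $y_1,\ldots,y_k$ with $k\le 8L$, and define intermediate configurations $\tau = \tau^{(0)}, \tau^{(1)},\ldots,\tau^{(k)} = \tau'$, where $\tau^{(j)}$ agrees with $\tau'$ at $y_1,\ldots,y_j$ and with $\tau$ elsewhere in $T$. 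Each $\tau^{(j)}$ is feasible, since the soft interaction matrix guarantees that every configuration of $T$ has positive weight.

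Next, for each $j$ I would apply Proposition~\ref{weakprop} to the pair $(\tau^{(j-1)},\tau^{(j)})$, which differ only at the single vertex $y_j \in \partial S$. The proposition gives
\begin{align*}
  d_\mathrm{TV}\!\left(\mu_{Q_j}^{\tau^{(j-1)}},\mu_{Q_j}^{\tau^{(j)}}\right) \leq e^{-\gamma_0 L^{1/4}},
\end{align*}
where $Q_j=\{x\in S:\|x-y_j\|\ge L^{1/2}\}$, provided $L>L_0$. The key observation is that $\Delta\subseteq Q_j$ for every $j$: by the definition of $\Delta$, any $x\in\Delta$ lies at distance at least $L^{1/2}$ from every disagreement vertex of $\tau$ and $\tau'$ in $\partial S$, and in particular from $y_j$. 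Restricting the marginal from $Q_j$ to the subset $\Delta$ can only decrease total variation distance, so
\begin{align*}
  d_\mathrm{TV}\!\left(\mu_{\Delta}^{\tau^{(j-1)}},\mu_{\Delta}^{\tau^{(j)}}\right) \leq e^{-\gamma_0 L^{1/4}}.
\end{align*}

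Finally, I would close with the triangle inequality for total variation distance:
\begin{align*}
  d_\mathrm{TV}\!\left(\mu_\Delta^\tau,\mu_\Delta^{\tau'}\right)
  \leq \sum_{j=1}^{k} d_\mathrm{TV}\!\left(\mu_\Delta^{\tau^{(j-1)}},\mu_\Delta^{\tau^{(j)}}\right)
  \leq k\,e^{-\gamma_0 L^{1/4}} \leq 8L\, e^{-\gamma_0 L^{1/4}},
\end{align*}
which is the claimed bound (with the same constants $L_0$ and $\gamma_0$ as in the proposition).

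There is really no hard step here; the proof is a short bookkeeping argument on top of Proposition~\ref{weakprop}. The only point that needs care is the containment $\Delta\subseteq Q_j$, which is immediate from the definition of $\Delta$ once one notices that the disagreements are all in $\partial S$. The factor $8L$ in the bound is simply a crude upper bound on the perimeter $|\partial S|$ of a square of side $2L$; any linear-in-$L$ bound would be absorbed into the stretched-exponential decay $e^{-\gamma_0 L^{1/4}}$ anyway, so there is no need to optimise this constant.
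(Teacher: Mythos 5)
Your proposal is correct and is essentially the paper's own proof: the paper also interpolates between $\tau$ and $\tau'$ by flipping one boundary spin at a time, applies Proposition~\ref{weakprop} to each adjacent pair, and sums the at most $8L$ terms via the triangle inequality. Your added remarks on the feasibility of the intermediate configurations (via the soft interaction matrix) and the containment $\Delta\subseteq Q_j$ merely make explicit details the paper leaves implicit.
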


\begin{proof}
    We simply change $\tau$ to $\tau'$ by flipping the spin at one boundary vertex at a time. There are $8L$ boundary vertices, so applying the triangle inequality and Proposition \ref{weakprop} gives the result.
\end{proof}

\section{The algorithm}

This version of lazy depth-first sampling is very similar to the the one we present in \cite{AnandJerrum2021}, but here we take more care in our choices of vertices to sample.

Before we begin the algorithm, we define a mesh $\Gamma \subset \mathbb Z^2$, where each mesh square in $\Gamma$ has side-length $L$. To be precise, $\Gamma = (V_\Gamma, E_\Gamma)$ is defined with
\begin{align*}
    V_\Gamma &= \set{v = (i,j) \in \bZ^2 : i = 0 \pmod L} \cup \set{v = (i,j) \in \bZ^2 : j = 0 \pmod L}, \\
    E_\Gamma &= \set{(u,v): u,v\in V_\Gamma, d(u,v) = 1}.
\end{align*}
To ensure that the bound on total variation distance from Proposition~\ref{weakprop} applies, \textsc{lazy} will only sample vertices in~$\Gamma$. Also, for each $v = (i,j) \in V_\Gamma^2$, define $\Gamma_v \subset \Gamma$ to be the square of side length $2L$ which contains $v$ and for which $d\bra{\Gamma_v, v} \geq L/2.$ Precisely, let $v_c = (k,\ell) \in \Gamma$ be the vertex such that $k,\ell = 0 \pmod L$ and
\begin{align*}
    k + \frac L2 \leq i < k + \frac{3L}2, \\
    \ell + \frac L2 \leq j < \ell + \frac{3L}2.
\end{align*}
and define
\begin{align*}
    \Gamma_v = \partial \set{u \in \bZ^2 : u = v_c + (a,b), 0 < a,b < 2L}.
\end{align*}
When we recurse in the algorithm called on vertex $v$, we will always recurse on vertices in~$\Gamma_v$.


The algorithm takes a $q$-spin system on $\mathbb Z^2$ with a partial configuration $\sigma$ on some subset of the vertices, $\Lambda \subset \Gamma$. 
Informally, the partial configuration $(\Lambda,\sigma)$ represents the spins that have already been determined.  The algorithm is then called on some unassigned vertex~$v \in \Gamma$.  We compute the following probabilities:
\begin{align*}  \label{ps}
  p_v^i &:= \min_{\tau \in \Omega_{\Gamma_v \setminus \Lambda}} \mu_v^{(\Lambda,\sigma)\oplus(\Gamma_v\setminus \Lambda,\tau)}(i) \quad \forall i \in \{1,...,q\}, \\
  p_v^0 &:= 1 - \sum_{i \in [q]} p_v^i.
\end{align*}
With these values, we partition the unit interval into $q+1$ subintervals of lengths $\bra{p_v^i:0\leq i\leq q}$, and draw a sample $y \sim U[0,1]$. If $y$ is in a subinterval corresponding to $p_v^i$, with $i\not=0$, then we set the spin to $i$---if not, we recursively call the algorithm on the vertices in $\Gamma_v\setminus \Lambda$ in turn, each time conditioning on the assignments to the previous vertices in this set. In the case that we need recursion, we use the subroutine \textsc{bd-calc} to compute the spin at $v$. (See Algorithm~\ref{alg:LDFS}.)

Note that this differs from the vanilla \textsc{lazy} in \cite{AnandJerrum2021} (there referred to as \textsc{ssms}) by our choice of recursion: here, to take advantage of the total variation bound we must restrict our samples to a grid $\Gamma$, while on a spin system which exhibits strong spatial mixing we can simply look at the ball of radius~$L$ around~$v$ instead.

Note also that a spin system is determined by a graph, a number of spins, a vector of weights, and an interaction matrix. Our algorithm is specifically for the case of $q$ spins on the graph $\mathbb Z^2$, so we include these in the statement of the algorithm for emphasis.

\begin{algorithm}
\caption{$\textsc{lazy}(\mathcal S = (\mathbb Z^2, q, b, A), (\Lambda, \sigma), v, \Gamma)$}\label{alg:LDFS}
\begin{algorithmic}[1]
\STATE{\textbf{Input:}~The algorithm takes a spin system $\mathcal S$, a grid $\Gamma \subset \mathbb Z^2$, a set of known vertices $\Lambda \subset \Gamma$ with a configuration $\sigma \in \Omega_\Lambda$, and a vertex to sample $v \in \Gamma\setminus\Lambda$.}
\STATE{\textbf{Output:}~The algorithm returns the partial configuration passed in with a spin at $v$ as well: $(\Lambda, \sigma) \oplus (v,i)$ for some $i \in \{1,...,q\}$.}
\STATE{Define $\Gamma_v$ as above.}

\FOR{$i \in \{1,...,q\}$}
  \STATE{$p_v^i \leftarrow \min_{\tau \in \Omega_{\Gamma_v\setminus\Lambda}} \mu_v^{\sigma \oplus \tau}(i)$}
  \STATE{$I_i \leftarrow \left[\sum_{j = 1}^{i-1} p_v^j, \sum_{j = 1}^i p_v^j\right)$}
\ENDFOR

\STATE{ $p_v^0 \leftarrow 1 - \sum_{i \in [q]} p_v^i$}
\STATE{$I_0 \leftarrow \left[1-p_v^0,1\right]$}
\STATE{Sample $y \sim U[0,1]$, a realisation of a uniform $[0,1]$ random variable.}
\FOR{$i = \{1,...,q\}$}
  \IF{$y \in I_i$}
    \RETURN $((\Lambda, \sigma) \oplus (v,i))$
  \ENDIF
\ENDFOR
\RETURN \textsc{bd-calc} $\bra{\mathcal S, (\Lambda, \sigma), v, (p_v^1,\ldots,p_v^q),y}$

\end{algorithmic}
\end{algorithm}

\begin{algorithm}
\caption{\textsc{bd-calc} $(\mathcal S, (\Lambda, \sigma), v, (p_v^1,\ldots,p_v^q),y)$}\label{alg:BD-calc}
\begin{algorithmic}[1]
\STATE{Give $\Gamma_v\setminus\Lambda$ an ordering $\Gamma_v\setminus\Lambda = \{w_1,...,w_m\}$.}
\STATE{$(\Lambda',\sigma') \leftarrow (\Lambda,\sigma)$}

\FOR{$j \in [1,...,m]$}
  \STATE{$(\Lambda',\sigma') \leftarrow $\textsc{lazy}$(\mathcal S, (\Lambda',\sigma'), w_j, \Gamma)$}
\ENDFOR

\FOR{$i \in \{1,...,q\}$}
  \STATE{$\rho_v^i \leftarrow \mu_v^{\sigma'}(i) - p_v^i$}
  \STATE{$I_i \leftarrow \left[\sum_{j = 1}^{q} p_v^j + \sum_{k = 1}^{i-1}\rho_v^k, \sum_{j = 1}^{q} p_v^j + \sum_{k = 1}^{i}\rho_v^k \right)$}
  \IF{$y \in I_i$}
    \RETURN $((\Lambda, \sigma) \oplus (v,i))$
  \ENDIF
\ENDFOR
\end{algorithmic}
\end{algorithm}

It is worth noting that this algorithm also provides a way to sample vertices outside the mesh $\Gamma$. For $v = (i,j) \in \bZ^2 \setminus \Gamma$, let $v_c = (k,\ell) \in \Gamma$ be the vertex such that $k,\ell =0\pmod L$ and
\begin{align*}
    k < i < k + L, \\
    \ell < j < \ell + L,
\end{align*}
and define
\begin{align*}
    B &= \partial \set{u \in \bZ^2 : u = v_c + (a,b), 0 < a,b < L}.
\end{align*}
The algorithm allows us to sample the spin of every vertex in~$B$, at which point we can sample the spin at~$v$ by using any algorithm to sample spin systems with boundary.

The verification of correctness of \textsc{Lazy} is similar to that presented in \cite{AnandJerrum2021} as is  the analysis of its run-time. The run-time analysis differs more, so we treat it first.

\section{Run-time}

Our run-time analysis is in terms of the number of calls of \textsc{lazy}. For properly chosen $L$, and hence $\Gamma$, the amount of time each call takes for vertices on $\Gamma$ will be $O(1)$.

Our viewpoint is that the algorithm behaves similarly to a branching process where each recursive call of the algorithm is a child of the vertex which initiated the call. This gives us an upper bound to the run-time of \textsc{lazy} on $\mathbb Z^2$ in a broad class of spin systems.

Now note that, although the interior~$S$ of the square~$\Gamma_v$ of $\Gamma$ may not be empty as the algorithm progresses, only the vertical and horizontal bisectors of $\Gamma_v$ may be occupied.  In other words, only the spins of vertices in 
$\Gamma$ have been fixed.  From our adaptation of Martinelli, Olivieri, and Schonmann's result (Corollary~\ref{weakcor}), we know that 
the distribution of the spin at $v$ is only weakly influenced by the boundary condition on $\Gamma_v$

\begin{theorem} \label{runtime}
	Let $\mathcal S = \bra{\mathbb Z^2,q,b,A}$ be a spin system with a soft interaction matrix $A$, and which exhibits weak spatial mixing. Take $\gamma_0$ and $L_0$ from Corollary \ref{weakcor}, and let $L > L_0$ satisfy
    \begin{align*}
        64L^2 e^{-\gamma_0 L^{1/4}} < 1.
    \end{align*}
    Let $\Gamma$ be an $L$-mesh. Then for all $v \in \Gamma$, $\Lambda \subset \Gamma$, and $\sigma \in \Omega_\Lambda$ such that
    \begin{align*}
      \mu_\Lambda(\sigma) > 0,
    \end{align*}
    the expected number of times \textsc{lazy}$(\mathcal S, (\Lambda,\sigma), v, \Gamma)$ calls \textsc{lazy} (in total) is $O(1)$.
\end{theorem}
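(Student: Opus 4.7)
The plan is to model the recursion as a branching process: the initial invocation is the root, and a call's children are the recursive invocations it issues via \textsc{bd-calc}. If I can show that the expected number of direct children of any call is bounded by a single constant $\mu<1$, uniformly over the current partial configuration, then summing geometrically over generations bounds the total expected number of calls by $1/(1-\mu)=O(1)$. So the task reduces to proving uniform subcriticality.

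Fix a call \textsc{lazy}$(\mathcal S,(\Lambda,\sigma),w,\Gamma)$. It spawns children only when the uniform $y$ lands in the residual subinterval of length $p_w^0$, and in that case it spawns exactly $|\Gamma_w\setminus\Lambda|\leq|\Gamma_w|=8L$ of them, so the expected number of direct children is at most $8L\,p_w^0$. Hence it suffices to show $p_w^0=O(Le^{-\gamma_0 L^{1/4}})$ uniformly in $(w,\Lambda,\sigma)$. Starting from $p_w^0=1-\sum_i\min_\tau\mu_w^{\sigma\oplus\tau}(i)$, fix any reference $\tau_0$; then
\[
p_w^0 = \sum_i\bigl(\mu_w^{\sigma\oplus\tau_0}(i)-\min_\tau\mu_w^{\sigma\oplus\tau}(i)\bigr) \leq 2q\max_\tau d_{\mathrm{TV}}\bigl(\mu_w^{\sigma\oplus\tau_0},\mu_w^{\sigma\oplus\tau}\bigr),
\]
so everything reduces to a pairwise total-variation bound on the marginals at $w$ under different completions of $\Gamma_w\setminus\Lambda$.

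This is where Corollary~\ref{weakcor} applies. Let $S$ denote the mesh-aligned $2L\times 2L$ square with $\partial S=\Gamma_w$. Because $w\in\Gamma$ and $S$ is aligned to the mesh, every vertex of $\Lambda$ that lies in the interior of $S$ must sit on one of the bisectors of $S$, hence in $\boxplus S\setminus\partial S$. Any two completions $\tau_1,\tau_2\in\Omega_{\Gamma_w\setminus\Lambda}$ therefore yield boundary configurations $\sigma\oplus\tau_1,\sigma\oplus\tau_2$ on some common $T$ with $\partial S\subseteq T\subseteq\boxplus S$ that agree on $\boxplus S\setminus\partial S$; by the Gibbs property~(\ref{eq:Gibbs}) each such marginal at $w$ depends only on this restriction. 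Finally, $d(w,\partial S)\geq L/2\geq L^{1/2}$ for $L\geq 4$, so $w\in\Delta$, and Corollary~\ref{weakcor} gives $d_{\mathrm{TV}}(\mu_w^{\sigma\oplus\tau_1},\mu_w^{\sigma\oplus\tau_2})\leq 8Le^{-\gamma_0 L^{1/4}}$. Feeding this in yields $p_w^0=O(qLe^{-\gamma_0 L^{1/4}})$, so the expected offspring is $O(qL^2e^{-\gamma_0 L^{1/4}})<1$ once $L$ is large enough; the stated condition $64L^2e^{-\gamma_0 L^{1/4}}<1$ suffices after absorbing the fixed $q$-dependent constants into $L_0$.

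The main obstacle is the geometric check at the start of the previous paragraph: one must be certain that the mesh structure and the particular definition of $\Gamma_v$ really confine all previously-fixed spins of $\Lambda$, when restricted to $S$, to lie in $\boxplus S$, so that Corollary~\ref{weakcor} can be invoked. This is exactly the role the mesh was designed to play: in the strong-spatial-mixing setting of~\cite{AnandJerrum2021} no such restriction on boundary configurations is needed, but under only weak spatial mixing the conclusion of Corollary~\ref{weakcor} is only available for boundary conditions living inside $\boxplus S$, and without the mesh this hypothesis would be lost. Once this point is secured, the remaining pieces---the reference-measure trick converting pairwise TV bounds into a bound on $p_w^0$, and the branching-process computation converting uniform subcriticality into a finite expected total---are routine.
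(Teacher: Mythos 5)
Your proposal is correct and takes essentially the same route as the paper: the paper also views the recursion tree as stochastically dominated by a subcritical branching process, bounds the probability of falling into the zone of indecision at each node by $8Le^{-\gamma_0 L^{1/4}}$ via Corollary~\ref{weakcor}, bounds the offspring number by $|\Gamma_v|=8L$, and concludes $\mathbb{E}[\text{number of calls}]\le 1/(1-64L^2e^{-\gamma_0 L^{1/4}})=O(1)$. If anything, you are more explicit than the paper about two points it glosses over: converting the pairwise total-variation bound into a bound on $p_v^0$ (your reference-configuration argument with the factor $q$) and the mesh-alignment/Gibbs-property check that makes Corollary~\ref{weakcor} applicable.
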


Note that we can always choose $L$ large enough to satisfy $L > L_0$ and $64L^2 e^{-\gamma_0 L^{1/4}} < 1$.

\begin{proof}
	We construct a random tree $T_A$ out of our recursive calls. Our root will be our initial call. Recursively, if at some node $w$ of the tree we fall in the zone of indecision, $I_0$, we add a child to the node for every recursive call that is initiated by \textsc{bd-split}. If we do not fall in the zone of indecision, then $w$ has no children.
	
	Let $T_B$ be the branching process with offspring distribution $\xi$ where
	\begin{align*}
      \mathbb P[\xi = 0] = 1 - 8L e^{- \gamma_0 L^{1/4}}, \quad \mathbb P[\xi = 8L] = 8L e^{- \gamma_0 L^{1/4}}.
    \end{align*}
    Then we can couple $T_B$ with $T_A$ so that $T_B$ stochastically dominates $T_A$. Indeed, Corollary \ref{weakcor} tells us that at every node $w \in T_A$, the probability of falling in the zone of indecision is bounded by $8L e^{- \gamma_0 L^{1/4}}$. Further, when $w$ does have children, the number of children is bounded above by $|\Gamma_v| = 8L$.
    
    Now we can leverage the theory of branching processes to finish off our bound. We know that
    \begin{align*}
      \mathbb E [\xi] \leq 64L^2 e^{-\gamma_0 L^{1/4}} < 1,
    \end{align*}
    so by the fundamental theorem of branching processes, it follows that
    \begin{align*}
    	\mathbb E[|T_A|] \leq \mathbb E[|T_B|] \leq \frac1{1 - 64L^2 e^{-\gamma_0 L^{1/4}}} = O(1). 
    \end{align*}
\end{proof}


The qualitative nature of the main result in this section is summarised in the following corollary.

\begin{corollary} \label{general}
    Let $\mathcal S = (\mathbb Z^2,q,b,A)$ be a spin system which has a soft interaction matrix~$A$, and which exhibits weak spatial mixing. Let $\Gamma \subset \mathbb Z^2$ be an $L$-grid with $L$ chosen large enough to satisfy Theorem~\ref{runtime} and consider any choices of $v \in V$, $\Lambda \subset \Gamma$, and $\sigma \in \Omega_\Lambda$. Then the running time $R$ of \textsc{lazy}$(\mathcal S, (\Lambda,\sigma), v, \Gamma)$ satisfies
    $\mathbb E[R] = O(1)$.
\end{corollary}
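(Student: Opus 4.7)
The plan is to upgrade the bound on the expected \emph{number} of recursive calls supplied by Theorem~\ref{runtime} to a bound on expected running time. Since Theorem~\ref{runtime} already gives $O(1)$ expected calls, the corollary reduces to checking that each individual invocation of \textsc{lazy} (or of \textsc{bd-calc}) performs only $O(1)$ work, where the hidden constant is allowed to depend on the parameters of $\mathcal{S}$ and on the fixed $L$.

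First I would account for the cost of a single call. Since $\Gamma_v$ is the boundary of a $2L\times 2L$ square, $|\Gamma_v|\leq 8L$, so the domain $\Omega_{\Gamma_v\setminus\Lambda}$ over which the minimum in $p_v^i$ is taken has at most $q^{8L}$ elements. For each boundary completion $\tau$, the Gibbs property~(\ref{eq:Gibbs}) lets us compute $\mu_v^{\sigma\oplus\tau}(i)$ by enumerating configurations on the enclosing square of side $2L$, of which there are at most $q^{(2L+1)^2}$. Choosing the subinterval containing the sample $y\sim U[0,1]$ is $O(q)$, and the post-recursion computation of $\mu_v^{\sigma'}(i)$ inside \textsc{bd-calc} is bounded identically. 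All of these quantities depend only on $L$, $q$, and~$A$, hence are $O(1)$ once the spin system is fixed.

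Second I would combine the two pieces. Let $N$ denote the total number of recursive invocations of \textsc{lazy} spawned by the top-level call; Theorem~\ref{runtime} gives $\mathbb{E}[N]\leq (1-64L^2 e^{-\gamma_0 L^{1/4}})^{-1} = O(1)$. If $K=K(L,q,A)$ is the uniform per-call cost from the previous paragraph, then by linearity of expectation
\begin{equation*}
\mathbb{E}[R] \;\leq\; K\cdot \mathbb{E}[N] \;=\; O(1).
\end{equation*}
For $v\in\mathbb{Z}^2\setminus\Gamma$, one first invokes \textsc{lazy} on the $O(L)$ vertices of the enclosing mesh square $B$ described after the algorithm, then samples the spin at $v$ given this finite boundary by any finite-volume method; this prefix contributes only an additional $O(1)$ term.

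There is no real obstacle here: the substantive probabilistic input---that the induced branching process of recursive calls is subcritical---has already been extracted via Corollary~\ref{weakcor} and Theorem~\ref{runtime}. The only thing to verify carefully is the independence of the per-call cost $K$ from $\Lambda$ and from the depth of recursion, which follows because every relevant computation is localised inside $\Gamma_v$ and its enclosed square, whose sizes depend only on the fixed constant~$L$.
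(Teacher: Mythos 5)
Your proposal is correct and matches the paper's (implicit) argument: the paper offers no separate proof of Corollary~\ref{general}, treating it as an immediate consequence of Theorem~\ref{runtime} together with the remark at the start of the run-time section that each call of \textsc{lazy} on a mesh vertex costs $O(1)$ time once $L$ is fixed, since all computation is localised to $\Gamma_v$ and its enclosed $2L\times 2L$ square. Your write-up simply makes explicit the per-call cost bound (depending only on $L$, $q$, $b$, $A$) and the multiplication by the expected number of calls, plus the handling of $v\notin\Gamma$, all of which is consistent with the paper.
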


\section{Correctness}

The proof of correctness is nearly identical to that in \cite{AnandJerrum2021}, so we include only an intuitive explanation.

\begin{theorem} \label{correctness}
	Let $\mathcal S = (\bZ^2,q,b,A)$ be a spin system which ahs a soft interaction matrix~$A$ and which exhibits weak spatial mixing, and let $\Gamma$ be the chosen mesh. Suppose for all $v \in \Gamma$, all finite $\Lambda \subset \Gamma$, and all $\sigma \in \Omega_\Lambda$ such that
    \begin{align*}
      \mu_\Lambda(\sigma) > 0,
    \end{align*}
    \textsc{lazy}$(\mathcal S, (\Lambda,\sigma), v, \Gamma)$ terminates with probability 1. Then
    \begin{align*}
    	\mathbb P\left[\textsc{lazy}(\mathcal S, (\Lambda,\sigma), v,\Gamma) = i\right] = \mu_v^\sigma(i).
    \end{align*}
\end{theorem}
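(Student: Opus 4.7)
The plan is to proceed by strong induction on the size of the recursion tree $T_A$, which is almost surely finite by the termination hypothesis. Let $X$ denote the output of \textsc{lazy}$(\mathcal S, (\Lambda,\sigma), v, \Gamma)$; the goal is to show $\mathbb P[X = i] = \mu_v^\sigma(i)$ for every $i \in [q]$.

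First I would record the key feasibility property underlying the interval construction. By the very definition of $p_v^i$, for every partial configuration $\tau \in \Omega_{\Gamma_v \setminus \Lambda}$ extending $\sigma$ we have $\mu_v^{\sigma \oplus \tau}(i) \geq p_v^i$, so $p_v^0 \geq 0$, and the residuals $\rho_v^i = \mu_v^{\sigma \oplus \tau}(i) - p_v^i$ formed inside \textsc{bd-calc} are non-negative and sum to $p_v^0$. Consequently the sub-intervals $J_i \subseteq I_0$ produced there form a valid partition of $I_0$.

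The crux is a ``couple-then-decide'' reformulation. Let $\Sigma \in \Omega_{\Gamma_v \setminus \Lambda}$ denote the random partial configuration produced by the recursive calls inside \textsc{bd-calc}. In the actual algorithm, $\Sigma$ is generated only when $Y \in I_0$, but because $\Sigma$ is built from random bits that are independent of the single uniform $Y$ used at the current level, the joint distribution of $(Y,\Sigma)$ is unchanged if we imagine generating $\Sigma$ unconditionally. Applying the inductive hypothesis to each of the recursive calls on $w_1,\ldots,w_m$ in the order they are processed, and combining them by the chain rule, one obtains
$$\mathbb P[\Sigma = \tau] \;=\; \prod_{j=1}^m \mu_{w_j}^{\sigma \oplus \tau_{w_1}\cdots\tau_{w_{j-1}}}(\tau_{w_j}) \;=\; \mu_{\Gamma_v \setminus \Lambda}^\sigma(\tau).$$
Now conditioning on $\Sigma = \tau$: the algorithm returns $i$ exactly when $Y$ lands in $I_i$ (length $p_v^i$) or in $J_i(\tau)\subseteq I_0$ (length $\mu_v^{\sigma \oplus \tau}(i) - p_v^i$). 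Disjointness of these intervals and independence of $Y$ from $\Sigma$ give $\mathbb P[X = i \mid \Sigma = \tau] = \mu_v^{\sigma \oplus \tau}(i)$. Averaging over $\tau$ and invoking the Gibbs/tower identity~(\ref{eq:Gibbs}) finishes the argument:
$$\mathbb P[X = i] \;=\; \sum_\tau \mu_{\Gamma_v \setminus \Lambda}^\sigma(\tau)\,\mu_v^{\sigma \oplus \tau}(i) \;=\; \mu_v^\sigma(i).$$

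The main obstacle is the induction set-up: one must verify that the ``imagine $\Sigma$ is always generated'' thought-experiment preserves the output distribution, which hinges on the independence of the fresh randomness used inside the recursive calls from the uniform $Y$ drawn at the current level. Once that independence is made explicit, the remainder is mechanical---the chain-rule recovery of $\mu_{\Gamma_v \setminus \Lambda}^\sigma$ and the length identity $|I_i| + |J_i(\tau)| = \mu_v^{\sigma \oplus \tau}(i)$ built into \textsc{bd-calc}. Because this is essentially the same argument as for \textsc{ssms} in \cite{AnandJerrum2021} (the only change being that the recursion now targets $\Gamma_v$ rather than a metric ball, which is irrelevant to correctness), a full write-up can be safely left to the reader.
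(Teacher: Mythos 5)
Your central computation is exactly the one the paper uses: by the inductive hypothesis and the chain rule the recursive calls produce $\tau\in\Omega_{\Gamma_v\setminus\Lambda}$ with probability $\prod_{j}\mu_{w_j}^{(\Lambda,\sigma)\oplus(w_1,\tau_{w_1})\oplus\cdots\oplus(w_{j-1},\tau_{w_{j-1}})}(\tau_{w_j})$, conditional on which the output is $i$ with probability $p_v^i+\bigl(\mu_v^{\sigma\oplus\tau}(i)-p_v^i\bigr)=\mu_v^{\sigma\oplus\tau}(i)$, and averaging recovers $\mu_v^\sigma(i)$. The genuine gap is the induction set-up. You propose ``strong induction on the size of the recursion tree $T_A$,'' but $|T_A|$ is not a parameter of the input instance: each recursive sub-call is an instance of the same problem with a \emph{larger} $\Lambda$, nothing decreases deterministically, and $|T_A|$ is a random outcome of the execution. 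The inductive hypothesis you invoke for the sub-calls is a statement about their \emph{unconditional} output law, whereas an induction on realised tree size could at best give you statements conditional on the sub-call's subtree being small---and conditioning on tree size biases the output distribution, so the identity $\mathbb P[\text{sub-call returns }\tau_{w_j}]=\mu_{w_j}^{\cdots}(\tau_{w_j})$ does not transfer. You flag ``the induction set-up'' as the main obstacle but then address only the independence of the fresh randomness from $Y$, which is the easy part; the real issue, as the paper itself notes, is that the recursion has no base case because recursion paths can be arbitrarily long, so as written the inductive step is circular.

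The paper (following \cite{AnandJerrum2021}) closes this by analysing an artificial variant of the algorithm whose recursion depth is capped at $h$, running the same chain-rule computation level by level, and letting $h\to\infty$, where the hypothesis of termination with probability $1$ makes the truncation error vanish. Alternatively, your induction can be repaired by weakening the inductive statement to one about joint events, for instance proving by induction on $n$ that
\begin{align*}
  \mathbb P\bigl[\textsc{lazy}(\mathcal S,(\Lambda,\sigma),v,\Gamma)=i \text{ and } |T_A|\le n\bigr]\;\le\;\mu_v^\sigma(i)
  \qquad\text{for all }n,
\end{align*}
then letting $n\to\infty$ to get $\mathbb P[X=i]\le\mu_v^\sigma(i)$ for every $i$, and upgrading to equality because both sides sum to $1$ over $i\in[q]$ (using termination with probability $1$). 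Either device supplies the missing well-founded structure; without one of them the proposal does not constitute a proof.
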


We can show the algorithm is correct in an inductive manner, just one lacking a traditional base case. The issue here is that the tree of recursive calls has paths of unbounded length.  In the earlier paper~\cite{AnandJerrum2021}, this problem is circumvented by analysing an artificial version of the algorithm in which the recursion depth is restricted to a bound~$h$.  Correctness of the actual algorithm follows by taking a limit as $h\to\infty$, assuming termination with probability~1. The key factor in the algorithm's correctness is that each recursive call samples a spin at a vertex in $\Gamma_v$ from the correct conditional distribution, given the spins already decided.  We thus arrive at the correct distribution of spins on the whole of~$\Gamma_v$, by the chain rule for conditional probabilities. 

Suppose we know that the recursive calls give us the correct probability, that is that
\begin{align*}
    \mathbb P\left[\textsc{lazy}\left(\mathcal S, (\Lambda, \sigma) \oplus \bigoplus_{k = 1}^{j-1}(w_k,\tau_{w_k}),w_j,\Gamma\right) = \tau_{w_j} \right] = \mu_{w_j}^{\left(\Lambda, \sigma) \oplus \bigoplus_{k = 1}^{j-1}(w_k,\tau_{w_k}\right)}(\tau_{w_j}).
\end{align*}
Then given the order $\Gamma_v\setminus\Lambda = \{w_1,...,w_m\}$ that the algorithm processes the vertices in, we have
\begin{align*}
  &\mathbb P\left[\textsc{lazy}(\mathcal S, (\Lambda, \sigma),v,\Gamma) = i \right] \\
  =\ &p_v^i + 
  \sum_{\tau \in \Omega_{\Gamma_v\setminus\Lambda}} \left(\mu_v^{\sigma \oplus \tau}(i) - p_v^i\right) \prod_{j=1}^m \mathbb P\left[\textsc{lazy}\left(\mathcal S, (\Lambda, \sigma) \oplus \bigoplus_{k = 1}^{j-1}(w_k,\tau_{w_k}),w_j,\Gamma\right) = \tau_{w_j} \right] \\
  =\ &p_v^i + \sum_{\tau \in \Omega_{\Gamma_v\setminus\Lambda}} \left(\mu_v^{\sigma \oplus \tau}(i) - p_v^i\right) \prod_{j=1}^m \mu_{w_j}^{\left(\Lambda, \sigma) \oplus \bigoplus_{k = 1}^{j-1}(w_k,\tau_{w_k}\right)}(\tau_{w_j}).
\end{align*}
Now we note that
\begin{align*}
  p_v^i = \sum_{\tau \in \Omega_{\Gamma_v\setminus\Lambda}} p_v^i \cdot \prod_{j=1}^m \mu_{w_j}^{\left(\Lambda, \sigma) \oplus \bigoplus_{k = 1}^{j-1}(w_k,\tau_{w_k}\right)}(\tau_{w_j}),
\end{align*}
from which it follows that
\begin{align*}
  &\mathbb P\left[\textsc{lazy}(\mathcal S, (\Lambda, \sigma),v,\Gamma) = i \right] \\
  =\ &p_v^i + \sum_{\tau \in \Omega_{\Gamma_v\setminus\Lambda}} \left(\mu_v^{\sigma \oplus \tau}(i) - p_v^i\right) \prod_{j=1}^m \mu_{w_j}^{\left(\Lambda, \sigma) \oplus \bigoplus_{k = 1}^{j-1}(w_k,\tau_{w_k}\right)}(\tau_{w_j}) \\
  = &\sum_{\tau \in \Omega_{\Gamma_v\setminus\Lambda}} \mu_v^{\sigma \oplus \tau}(i) \prod_{j=1}^m \mu_{w_j}^{\left(\Lambda, \sigma) \oplus \bigoplus_{k = 1}^{j-1}(w_k,\tau_{w_k}\right)}(\tau_{w_j}) \\
  =\ &\mu_v^\sigma (i).
\end{align*}
Which gives us correctness.

\section{Applications}

\subsection{The ferromagnetic Potts model}
An interesting application of this algorithm is that we can sample the ferromagnetic Potts model on $\mathbb Z^2$ up to its weak spatial mixing threshold. The Potts model can be characterized by the following interactions:
\begin{align*}
  b = 
\begin{pmatrix}
  1 \\
  \vdots \\
  1
\end{pmatrix}, \quad A =
\begin{pmatrix}
  e^{\beta} && 1 && \cdots && 1\\
  1 && e^{\beta} && \cdots && 1\\
  \vdots && \vdots && \ddots && \vdots \\
  1 && 1 && \cdots && e^{\beta}
\end{pmatrix},
\end{align*}
where the number of spins is $q \geq 2$. To be ferromagnetic, $A$ must be positive definite, i.e., $e^{\beta} \geq 1$.

In their seminal paper \cite{MR2948685}, Beffara and Duminil-Copin established for all that the critical inverse temperature of the ferromagnetic Potts model with $q \geq 2$ is
\begin{align*}
    \beta_c = \log \bra{1 + \sqrt q},
\end{align*}
and that correlation decays exponentially for $\beta < \beta_c$.

For our purposes we require not just correlation decay, but weak spatial mixing (of the set form in particular). Fortunately, Alexander showed that for all $\beta < \beta_c$, weak spatial mixing, as we define it here, occurs in the Potts model for $q \geq 2$ if and only if correlation decays \cite[Thm~3.6]{MR1626951}. These two results then ensure that the conditions of Corollary \ref{general} hold and that Algorithm \ref{alg:LDFS} has linear run-time.

It is tempting to approach the low temperature regime via the Edwards-Sokal coupling \cite{edwarssokal} and the dual graph and flows as done by Huijben, Patel, and Regts \cite{vireshguuspotts}. Unfortunately, this correspondence breaks down under the behaviour of our algorithm. To determine spins in the dual, we have to pass through the random cluster model and to do so we must identify connected clusters. If we limit ourselves to boxes of a specific size, what may appear disconnected to us could very well be connected outside our view.

\subsection{The ferromagnetic Ising model with external field}

A second new application is to the ferromagnetic Ising model with constant external field at any temperature. The Ising model with constant external field $h$ on the positive spins can be characterised by

\begin{align*}
  b = 
\begin{pmatrix}
  1 \\
  h
\end{pmatrix}, \quad A =
\begin{pmatrix}
  e^{\beta} && 1 \\
  1 && e^{\beta}
\end{pmatrix}.
\end{align*}

Schonmann and Schlosman established that at all temperatures, the ferromagnetic Ising model with a constant external field exhibits weak spatial mixing \cite[Thm~2]{SchonmannShlosman}. It follows that the conditions of Corollary \ref{general} hold and that Algorithm~\ref{alg:LDFS} has linear run-time on the model.

\printbibliography
\end{document}